\documentclass[11pt]{article}
\usepackage{authblk}
\usepackage{amssymb,latexsym,amsmath}
\usepackage{amsthm}

\theoremstyle{definition}
\newtheorem{theorem}{Theorem}
\newtheorem{corollary}[theorem]{Corollary}
\newtheorem{proposition}[theorem]{Proposition}
\newtheorem{lemma}[theorem]{Lemma}
\newtheorem{definition}[theorem]{Definition}
\newtheorem{example}[theorem]{Example}
\newtheorem{notation}[theorem]{Notation}
\newtheorem{remark}[theorem]{Remark}

\usepackage{graphicx}
\usepackage{tikz}
\usetikzlibrary{matrix,decorations.pathreplacing}

\usepackage{relsize}

\newcommand{\numberset}{\mathbb}

\newcommand{\N}{\numberset{N}}

\newcommand{\F}{\numberset{F}}

\newcommand{\mC}{\mathcal{C}}
\newcommand{\mD}{\mathcal{D}}

\newcommand{\mG}{\mathcal{G}}

\usepackage{hyperref}
\newcommand{\mA}{\mathcal{A}}

\usepackage[margin=2.8cm]{geometry}

\begin{document}

\title{\textbf{Rank-metric codes  and their duality theory}}

\author{Alberto Ravagnani%
  \thanks{E-mail: \texttt{alberto.ravagnani@unine.ch}. The author
was partially supported by the Swiss National Science
Foundation through grant no. 200021\_150207.}}
\affil{Institut de Math\'{e}matiques \\ Universit\'{e} de
Neuch\^{a}tel\\Emile-Argand 11, CH-2000 Neuch\^{a}tel, Switzerland}

\date{}

\maketitle

\begin{abstract}
We compare the two duality theories of rank-metric codes proposed by Delsarte
and Gabidulin, 
proving that the former generalizes the latter.
We also 
give an elementary proof of MacWilliams identities for the general case of
Delsarte rank-metric
codes. The identities which we derive are very easy to handle, and allow
us to re-establish in a very
concise way the main results of the theory of rank-metric codes
first proved by Delsarte employing the theory of association schemes and regular
semilattices.
We also show that our identities imply  as a corollary the original MacWilliams
identities established by Delsarte.
We describe how the minimum and maximum rank of a
rank-metric code relate to the minimum and maximum rank of the dual
code, 
giving some bounds
and characterizing the codes attaining them. Then we study optimal
anticodes in the
rank metric, describing them in terms of optimal codes (namely, MRD codes). In particular, we prove
that the dual of an optimal anticode is
an
optimal anticode. Finally, as an application of our results to a classical
problem in enumerative
combinatorics, we derive both a recursive and an explicit  formula for the number
of $k \times m$ matrices over a finite field with given rank and $h$-trace.
\end{abstract}

\section*{Introduction}\label{intr0}

In \cite{del1} Delsarte defines rank-metric codes as sets of matrices of given
size over a finite field $\F_q$.
The distance between two matrices is given by
the rank of their difference. Interpreting matrices as bilinear forms,
Delsarte studies rank-metric codes as  association schemes, whose adjacency
algebra yields the so-called MacWilliams
transform of distance enumerators of codes. The results of \cite{del1} are based on the general 
theory of designs and codesigns in regular 
semilattices
developed in \cite{del2}.

In \cite{gabid} Gabidulin proposed a different definition of rank-metric code,
in which the
codewords are vectors with entries in an extension field $\F_{q^m}$. 
MacWilliams identities for
Gabidulin codes were obtained in \cite{gadu}. As we will explain in details,
one can naturally  associate to any
vector with entries in an extension field $\F_{q^m}$ a matrix of prescribed size
over the base field $\F_q$. The rank of the
vector (as defined by
Gabidulin) coincides with the ordinary rank of the associated
matrix.
Hence there exists a natural way
to associate to a Gabidulin
 code a Delsarte code with the same metric
properties. From this point of view, Gabidulin codes
can be regarded as a special case of Delsarte codes. It
is however not clear  in general how the duality theories
of these two families of codes relate to each other. This is one of the
questions that we address in this work.

Both linear Delsarte and Gabidulin codes have interesting applications in
information theory. 
Recently it was shown how to employ
them for error correction in non-coherent linear network coding and in coherent 
linear network coding under an adversarial channel model
 (see e.g. \cite{metrics} and the 
references within). 
Rank-metric codes were also proposed to secure a network coding
communication 
system against an eavesdropper in a
universal way (see \cite{metrics3} for details).

Motivated by these
applications, in this paper we study the duality 
theories of linear Delsarte
and Gabidulin codes, mainly focusing on their MacWilliams identities.
In 
coding theory, a MacWilliams identity establishes a relation between metric
properties of a 
code and
metric properties of the dual code. MacWilliams identities exist for
several types of codes and metrics.
As Gluesing-Luerssen observed in  \cite{heide1}, association schemes  provide
the most general approach to 
MacWilliams identities, and apply to both linear and non-linear codes (see
\cite{del3}, 
\cite{camion} and \cite{delslev}).
On the other hand, the machinery of association schemes and of the related
Bose-Mesner algebras 
is a very elaborated mathematical tool.
Several authors proved independently  the MacWilliams identities for the
various
types of codes in less sophisticated ways.

A different viewpoint on MacWilliams identities for general additive codes was
recently 
proposed by Gluesing-Luerssen in \cite{heide1}. The
approach is based on character 
theory and partitions of groups. See also \cite{grant} for
a  character-theoretic approach to MacWilliams identities for the rank
and the Hamming metric.

Both the theory of association schemes and the approach of \cite{heide1}  apply
to Delsarte rank-metric codes, giving 
MacWilliams identities in the form 
presented in \cite{del1}. On the other hand, to the extent of our knowledge,
there is no 
elementary proof for them.

Let us briefly describe the content of the paper. We start comparing the
duality theories of Delsarte and Gabidulin codes, proving that the former
generalize the latter.  
Then we give a  short proof
of MacWilliams identities for the general case of Delsarte rank-metric codes.
We
only employ elementary
properties of the rank metric, linear algebra techniques and a double counting
argument, avoiding
 any sort of numerical calculation.
The identities which we derive have a very convenient form, which allows us to
re-establish  the most important results of the theory of rank-metric codes in
a concise way. We also show that the original MacWilliams
identities proved by Delsarte in \cite{del1} can be easily obtained
from our identities.
In a second part we  prove some  bounds that relate the minimum and maximum rank of a
code to the minimum and maximum rank of the dual code, characterizing the codes
which attain them. The bounds show that also the maximum rank of a code
(and not only the minimum rank) deserves interest. We also investigate
anticodes in the rank metric, and present a new
characterization of optimal anticodes in terms of
optimal codes. Then we apply such characterization to show that
the dual of
an optimal anticode is an optimal anticode. This result may be regarded as the
analogue for anticodes of the fact that the dual of an optimal code is an
optimal code.
Finally, as an application of
our results to a classical problem in enumerative combinatorics, we give
both a recursive and an explicit formula for the
number of rectangular matrices with given rank and $h$-trace over a finite
field $\F_q$. To the extent of
our knowledge, formulas of this type are not available in the literature.

The layout of the paper is as follows. Section \ref{sec1} contains some
preliminaries on  Delsarte
rank-metric codes.
In Section \ref{sec2} we compare Delsarte and Gabidulin codes. In Section \ref{sec3} 
we give an elementary proof
for  MacWilliams identities for the general case of Delsarte  rank-metric
codes, and use them to establish the main 
results of the  theory of these codes. In Section
\ref{sec4} and in the last part of Section \ref{sec4a} we study how the minimum and 
the maximum rank  of a rank-metric code
relate to the minimum and maximum rank of the dual code, proving some
bounds on the 
involved parameters and characterizing the codes
which attain them. Optimal anticodes in the rank metric are studied in Section
\ref{sec4a}. In Section \ref{sec5} we derive a recursive formula
for the number of rectangular matrices over $\F_q$ of given rank and 
$h$-trace. In Appendix \ref{appa} we show how our results relate to the original work by
Delsarte, giving in particular an explicit formula for the number of rectangular 
matrices over $\F_q$ of given rank and 
$h$-trace.

\section{Preliminaries} \label{sec1}

Throughout this paper, $q$ denotes a fixed prime power, and $\F_q$ the finite field with $q$ elements. We
also work with positive integers $k$ and $m$.
\begin{notation}
 We denote by $\mbox{Mat}(k\times m, \F_q)$ the $\F_q$-vector space of $k
\times m$ 
matrices with
entries in $\F_q$. Given a matrix $M \in \mbox{Mat}(k\times m, \F_q)$, we write 
$\mbox{Tr}(M)$ for the trace of $M$, and $M_i$ for the $i$-th column of $M$,
i.e., the vector $(M_{1i},M_{2i},...,M_{ki})^t \in \F_q^k$. 
The transpose of $M$ is $M^t$, while $\mbox{rk}(M)$ denotes the rank of $M$. 
The vector space generated by
the columns of a matrix $M \in \mbox{Mat}(k\times m, \F_q)$ is
$\mbox{colsp}(M) \subseteq \F_q^k$. When the size is clear from the context, we denote a zero matrix simply by $0$.
\end{notation}

Let us briefly recall the setup of \cite{del1}.

\begin{definition}
 The \textbf{trace product} of matrices $M,N \in \mbox{Mat}(k\times
m, \F_q)$ is denoted and defined by
 $$\langle M , N \rangle := \mbox{Tr}(MN^t).$$
\end{definition}

 It is  easy to check that the map
$\langle \cdot , \cdot \rangle: \mbox{Mat}(k \times m, \F_q) \times \mbox{Mat}(k \times m, \F_q) \to \F_q$
is a scalar product (i.e., symmetric, bilinear and non-degenerate).

\begin{definition} \label{variedefgab}
 A (\textbf{Delsarte rank-metric}) \textbf{code} of size $k\times m$ over
$\F_q$
is an $\F_q$-linear subspace 
 $\mC \subseteq \mbox{Mat}(k \times m, \F_q)$.
  The \textbf{minimum rank} of a non-zero code $\mC$ is denoted and defined by
 $\mbox{minrk}(\mC):= \min \{ \mbox{rk}(M) : M \in \mC, \ \mbox{rk}(M) >0\}$,
while the \textbf{maximum rank} of any code $\mC$
 is denoted and defined by $\mbox{maxrk}(\mC):= \max \{ \mbox{rk}(M) : M \in \mC\}$.
 The \textbf{dual} of $\mC$ is  the Delsarte code
 $\mC^\perp:= \{ N \in \mbox{Mat}(k \times m, \F_q) : \langle M, N \rangle =0 \mbox{ for all } M \in \mC\}$.
\end{definition}

\begin{remark}
  We notice that Delsarte codes are by definition linear over $\F_q$, and
that linearity is a crucial property for the results presented in this paper.
\end{remark}

The following lemma summarizes some straightforward properties of duality.

\begin{lemma}\label{proprduale}
 Let $\mC,\mD \subseteq \mbox{Mat}(k \times m, \F_q)$ be codes. We have:
 \begin{itemize}
  \item $(\mC^\perp)^\perp=\mC$;
  \item $\dim_{\F_q}(\mC^\perp)= km-\dim_{\F_q}(\mC)$;
  \item $(\mC \cap \mD)^\perp=\mC^\perp + \mD^\perp$, and $(\mC +
\mD)^\perp=\mC^\perp \cap \mD^\perp$.
 \end{itemize}
\end{lemma}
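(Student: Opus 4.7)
The plan is to treat these three properties as instances of the standard theory of orthogonal complements with respect to a non-degenerate symmetric bilinear form on a finite-dimensional vector space, applied to $\langle\cdot,\cdot\rangle$ on the $km$-dimensional $\F_q$-space $\mbox{Mat}(k\times m,\F_q)$. The non-degeneracy of the trace product is noted just above the statement, so I may invoke it freely.

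First I would handle the dimension formula, since the remaining items will be deduced from it. Consider the linear map $\phi:\mbox{Mat}(k\times m,\F_q)\to \mbox{Hom}_{\F_q}(\mC,\F_q)$ sending $N$ to the functional $M\mapsto \langle M,N\rangle$. Its kernel is exactly $\mC^\perp$. By non-degeneracy of the trace product, every $\F_q$-linear functional on $\mbox{Mat}(k\times m,\F_q)$ has the form $M\mapsto\langle M,N\rangle$ for some $N$; restricting such functionals to $\mC$ and using that a linear functional on $\mC$ extends to the ambient space shows that $\phi$ is surjective. The rank--nullity theorem then gives $\dim_{\F_q}(\mC^\perp)=km-\dim_{\F_q}(\mC)$.

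Next I would prove $(\mC^\perp)^\perp=\mC$. Symmetry of the trace product yields the inclusion $\mC\subseteq(\mC^\perp)^\perp$ directly from the definitions. Applying the dimension formula twice yields $\dim_{\F_q}\big((\mC^\perp)^\perp\big)=km-\dim_{\F_q}(\mC^\perp)=\dim_{\F_q}(\mC)$, so the inclusion is an equality.

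Finally, for the lattice identities, I would establish $(\mC+\mD)^\perp=\mC^\perp\cap\mD^\perp$ by a direct one-line unfolding: $N\in(\mC+\mD)^\perp$ iff $\langle M+M',N\rangle=0$ for all $M\in\mC$ and $M'\in\mD$, which by bilinearity and by taking $M=0$ or $M'=0$ is equivalent to $N\in\mC^\perp\cap\mD^\perp$. Replacing $\mC,\mD$ by $\mC^\perp,\mD^\perp$ and applying the involution $(\cdot)^{\perp\perp}=\mbox{id}$ from the previous step converts this into $(\mC\cap\mD)^\perp=\mC^\perp+\mD^\perp$. There is no real obstacle here; the only thing to be careful about is invoking the non-degeneracy of $\langle\cdot,\cdot\rangle$ cleanly in the surjectivity argument for $\phi$, since all three bullets collapse once that is in hand.
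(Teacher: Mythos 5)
Your proof is correct. The paper states this lemma without proof, introducing it as a summary of ``straightforward properties of duality,'' and the standard orthogonal-complement argument you give (rank--nullity on $N\mapsto\langle\cdot,N\rangle|_{\mC}$ using non-degeneracy, then dimension counts and the involution for the remaining bullets) is precisely the routine justification being implicitly invoked.
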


Recall that for $n \in \N_{\ge 1}$ the standard inner product of vectors 
$(x_1,...,x_n),(y_1,...,y_n) \in \F_q^n$ is defined by $(x_1,...,x_n) \cdot
(y_1,...,y_n):= \sum_{i=1}^n x_iy_i$. It is easy to see that the trace product
$\langle \cdot, \cdot \rangle$ on $\mbox{Mat}(k\times m,
\F_q)$ and the standard inner product on $\F_q^k$ relate as follows.

\begin{lemma} \label{rel}
 Let $M, N \in \mbox{Mat}(k\times m, \F_q)$. We have $\langle M, N \rangle=
\sum_{i=1}^m M_i \cdot N_i$.
\end{lemma}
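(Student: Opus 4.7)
The plan is to verify the identity by a direct coordinate computation, working from the definition $\langle M,N\rangle = \mbox{Tr}(MN^t)$ and unpacking matrix multiplication and trace in terms of entries.

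First I would write out the $(i,i)$-entry of the product $MN^t$: since $(MN^t)_{ij} = \sum_{\ell=1}^m M_{i\ell} (N^t)_{\ell j} = \sum_{\ell=1}^m M_{i\ell}N_{j\ell}$, taking $j=i$ and summing over $i$ gives
\[
\mbox{Tr}(MN^t) = \sum_{i=1}^k\sum_{\ell=1}^m M_{i\ell}N_{i\ell}.
\]
Then I would swap the order of summation and recognize the inner sum: for fixed $\ell$, $\sum_{i=1}^k M_{i\ell}N_{i\ell}$ is exactly the standard inner product $M_\ell \cdot N_\ell$ of the $\ell$-th columns, by the definition $M_\ell = (M_{1\ell},\ldots,M_{k\ell})^t$. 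This yields $\langle M,N\rangle = \sum_{\ell=1}^m M_\ell \cdot N_\ell$, as required.

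An equivalent and perhaps cleaner route is to write $MN^t$ as a sum of outer products of columns, namely $MN^t = \sum_{i=1}^m M_i N_i^t$, and then use $\mbox{Tr}(M_iN_i^t) = N_i^t M_i = M_i \cdot N_i$ together with the linearity of trace. Either way there is no genuine obstacle: the statement is a bookkeeping identity, and the only subtlety is being careful to distinguish the index running over the $m$ columns from the index running over the $k$ rows when swapping the order of summation.
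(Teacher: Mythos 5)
Your computation is correct: expanding $\mbox{Tr}(MN^t)=\sum_{i=1}^k\sum_{\ell=1}^m M_{i\ell}N_{i\ell}$ and regrouping by columns is exactly the routine verification the paper has in mind when it states this lemma without proof as ``easy to see.'' Both your index-swapping argument and the outer-product variant are valid, and there is nothing to add.
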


\begin{notation}
 Lemma \ref{rel} says that the trace product of two matrices is the sum of the 
 standard inner products of the columns of the two matrices. In
particular, the trace product $\langle \cdot,
\cdot \rangle$ on 
 $\mbox{Mat}(k \times 1, \F_q) \cong \F_q^k$ coincides with the standard inner 
 product on $\F_q^k$. Hence from now on we denote both products by $\langle
\cdot , \cdot \rangle$.
 We also denote by $U^\perp$ the dual of a vector subspace $U \subseteq
\F_q^k$, i.e., $U^\perp:= \{ x \in \F_q^k : \langle u,x\rangle =0 \mbox{
for all } u \in U\}$.
\end{notation}

The following result, first proved by Delsarte, is well-known.
\begin{theorem}[\cite{del1}, Theorem 5.4] \label{bound1}
 Let $\mC \subseteq \mbox{Mat}(k \times m, \F_q)$ be a non-zero code, and let $d:=\mbox{minrk}(\mC)$. We have
 $$\dim_{\F_q} (\mC) \le \max \{ k,m\} (\min \{ k,m\}-d+1).$$ Moreover, for any
$1 \le d \le \min \{k,m \}$ there exists a non-zero code 
 $\mC \subseteq \mbox{Mat}(k \times m, \F_q)$ of minimum rank $d$ which attains the upper bound.
\end{theorem}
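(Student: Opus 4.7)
The plan is to split the proof into the upper bound and the construction of codes meeting it, using only elementary tools.

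For the upper bound, I would first reduce to the case $k \leq m$ via transposition. Indeed, the map $M \mapsto M^t$ is an $\F_q$-linear bijection $\mbox{Mat}(k \times m, \F_q) \to \mbox{Mat}(m \times k, \F_q)$ which preserves the rank, so it sends a code $\mC$ of minimum rank $d$ to a code $\mC^t$ of the same dimension and same minimum rank in $\mbox{Mat}(m \times k, \F_q)$. Hence it suffices to prove that if $k \leq m$ then $\dim_{\F_q}(\mC) \leq m(k-d+1)$. To do this, I would consider the $\F_q$-linear projection $\pi: \mbox{Mat}(k \times m, \F_q) \to \mbox{Mat}((k-d+1) \times m, \F_q)$ that keeps only the first $k-d+1$ rows of a matrix. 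If $M \in \mC$ satisfies $\pi(M)=0$, then at most the last $d-1$ rows of $M$ are non-zero, so $\mbox{rk}(M) \leq d-1$, which forces $M=0$ by the definition of $d = \mbox{minrk}(\mC)$. Thus $\pi$ restricts to an injective $\F_q$-linear map on $\mC$, and $\dim_{\F_q}(\mC) \leq m(k-d+1)$, as required.

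For the existence of codes attaining the bound, I would exhibit an explicit construction of Gabidulin/MRD type. Again reducing by transposition to $k \leq m$, choose elements $\alpha_1, \ldots, \alpha_m \in \F_{q^m}$ that are linearly independent over $\F_q$, and consider
\[
\mC' := \bigl\{ (f(\alpha_1), \ldots, f(\alpha_m)) : f(x) = \sum_{i=0}^{k-d} a_i x^{q^i}, \ a_i \in \F_{q^m} \bigr\} \subseteq \F_{q^m}^m.
\]
The key facts are that any nonzero linearized polynomial of $q$-degree at most $k-d$ has at most $q^{k-d}$ roots in $\F_{q^m}$, so the evaluation map is injective and $\dim_{\F_{q^m}}(\mC') = k-d+1$. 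Now fix an $\F_q$-basis of $\F_{q^m}$ and expand each entry of a vector in $\mC'$ along this basis to obtain a matrix in $\mbox{Mat}(m \times m, \F_q)$. Taking the first $k$ rows of each such matrix yields a code $\mC \subseteq \mbox{Mat}(k \times m, \F_q)$ whose minimum rank is still $d$ (since the rank over $\F_q$ of the matrix associated to a nonzero vector of $\mC'$ coincides with the number of nonzero entries counted with $\F_q$-independence) and whose $\F_q$-dimension is $m(k-d+1)$, meeting the bound.

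The main obstacle is the sharpness half: the projection argument for the bound is short and self-contained, but the construction relies on properties of linearized polynomials and the dictionary between vectors over $\F_{q^m}$ and matrices over $\F_q$, which is only developed in detail in Section \ref{sec2}. In the final writeup I would either defer the sharpness statement to a remark after the Delsarte/Gabidulin comparison, or cite the Gabidulin/MRD construction from the literature, since the original result is due to Delsarte and is well established.
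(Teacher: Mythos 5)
Your proof is essentially correct, but note that the paper itself does not prove this statement: it is quoted from Delsarte (\cite{del1}, Theorem 5.4), with Remark \ref{tuttomrd} deferring the sharpness construction to \cite{del1}, Section 6. The construction you propose does, however, reappear inside the paper in the proof of Lemma \ref{prel}, where the author builds exactly such an evaluation code of linearized polynomials (citing \cite{KK1}, Theorem 14, for its parameters) and converts it to a Delsarte code via Proposition \ref{gabtodel}. Your upper-bound argument (projection onto the first $k-d+1$ rows, injective on $\mC$ because a matrix in the kernel has at most $d-1$ nonzero rows and hence rank at most $d-1$) is clean, complete, and more elementary than anything in the cited sources.

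The one under-justified step is in the sharpness half: you assert that the truncated code has minimum rank $d$, but your parenthetical only explains why the rank of the associated matrix equals the rank of the vector $(f(\alpha_1),\dots,f(\alpha_k))$, not why that rank is at least $d$ for every nonzero $f$. The missing argument is rank--nullity for the $\F_q$-linear map $f$ restricted to $\mbox{Span}\{\alpha_1,\dots,\alpha_k\}$: its kernel has dimension at most $\deg_q(f)\le k-d$, so its image, which is $\mbox{Span}\{f(\alpha_1),\dots,f(\alpha_k)\}$, has dimension at least $d$. (Alternatively, once you know the minimum rank is at least $d$ and $\dim_{\F_q}(\mC)=m(k-d+1)$, your own upper bound forces the minimum rank to equal $d$.) You should also record that injectivity of the truncated evaluation map needs $k-d<k$, which holds since $d\ge 1$; your stated root-count bound $q^{k-d}$ gives this. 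With those two sentences added, the proof is complete and self-contained, and your suggestion to either defer the construction until after Section \ref{sec2} or cite it is consistent with how the paper actually organizes the material.
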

\begin{definition} \label{defmrd}
 A code attaining the bound of Theorem \ref{bound1} is said to be an \textbf{optimal} or
\textbf{maximum rank distance} code (\textbf{MRD} code in short).
 The zero code will be also considered MRD.
\end{definition}

\begin{remark} \label{tuttomrd}
  Notice that $\mbox{Mat}(k \times m, \F_q)$ is a trivial example of an MRD code
with minimum rank $1$ and dimension $km$.
 See \cite{del1}, Section 6, for the construction of codes attaining the bound of Theorem \ref{bound1} for any choice of the parameters.
\end{remark}

\begin{definition}
 Given a code $\mC$ and an integer $i \in
\N_{\ge 0}$ define
 $A_i(\mC):= |\{ M \in \mC : \mbox{rk}(M)=i\}|$. The collection
${(A_i(\mC))}_{i\in \N_{\ge 0}}$ is said to be the \textbf{rank distribution}
 of $\mC$.
\end{definition}

\begin{remark}
 The minimum rank of a non-zero code $\mC \subseteq \mbox{Mat}(k \times m, \F_q)$
is the smallest $i>0$ such that $A_i(\mC)>0$.
 Notice that we define $A_i(\mC)$ for any $i \in \N_{\ge 0}$, even if we clearly have $A_i(\mC)=0$ for all integers $i > \min \{ k,m\}$. 
 This choice will
 simplify the statements in the sequel.
\end{remark}

\section{Delsarte and Gabidulin rank-metric codes} \label{sec2}
A different definition of rank-metric code, proposed by Gabidulin, is
the following.
\begin{definition}[see \cite{gabid}]
 Let $\F_{q^m}/\F_q$ be a finite field extension. A \textbf{Gabidulin}
(\textbf{rank-metric}) \textbf{code}
 of length $k$ over $\F_{q^m}$ is an $\F_{q^m}$-linear subspace $C \subseteq
\F_{q^m}^k$. 
 The \textbf{rank} of a vector $\alpha=(\alpha_1,...,\alpha_k) \in \F_{q^m}^k$ is defined as
 $\mbox{rk}(\alpha):= \dim_{\F_q} \mbox{Span} \{ \alpha_1,...,\alpha_k\}$.
 The \textbf{minimum rank} of a Gabidulin code $C \neq 0$ is
$\mbox{minrk}(C):= \min \{ \mbox{rk}(\alpha):
 \alpha \in C, \ \alpha \neq 0\}$, and the \textbf{maximum rank} of any
 Gabidulin code $C$ is $\mbox{maxrk}(C):= \max \{ \mbox{rk}(\alpha) : \alpha \in C\}$.
 The  \textbf{rank distribution} of $C$ is the collection
 ${(A_i(C))}_{i \in \N_{\ge 0}}$, where $A_i(C):= |\{ \alpha \in C :
\mbox{rk}(\alpha) =i\}|$.
 The \textbf{dual} of a Gabidulin code $C$ is denoted and defined by
$C^\perp:=\{ \beta \in \F_{q^m}^k :
\langle \alpha, \beta \rangle =0\mbox{ for all }
 \alpha \in C\}$, where $\langle \cdot , \cdot \rangle$ is the standard
inner product of $\F_{q^m}^k$.
\end{definition}

It is natural to ask how Gabidulin
and Delsarte codes relate to each other.

\begin{definition}\label{assoc}
 Let $\mG= \{ \gamma_1,...,\gamma_m\}$ be a basis
of $\F_{q^m}$ over $\F_q$. The matrix \textbf{associated} to a vector
$\alpha \in \F_{q^m}^k$ with respect to $\mG$ is the $k \times m$ matrix 
$M_{\mG}(\alpha)$ with entries in $\F_q$ defined by $\alpha_i= \sum_{j=1}^m
M_{\mG}(\alpha)_{ij}\gamma_j$ for all $i=1,...,k$. The Delsarte code
\textbf{associated} to a Gabidulin code $C \subseteq \F_{q^m}^k$ with respect to
the basis $\mG$
 is $\mC_{\mG}(C):=\{ M_{\mG}(\alpha) : \alpha \in C\}
\subseteq
 \mbox{Mat}(k \times m, \F_q)$.
\end{definition}

Notice that, in the previous definition, the $i$-th row of $M_{\mG}(\alpha)$ is just the expansion
of the entry $\alpha_i$ over the basis $\mG$. The following result is immediate
and well-known. We include it here
for completeness.

\begin{proposition}\label{gabtodel}
 Let $C \subseteq \F_{q^m}^k$ be a Gabidulin code. For any basis $\mG$ of $\F_{q^m}$ over $\F_q$, $\mC_{\mG}(C) \subseteq
\mbox{Mat}(k\times m, \F_q)$
is a Delsarte rank-metric code with
 $$\dim_{\F_q}
\mC_{\mG}(C)=m \cdot \dim_{\F_{q^m}}(C).$$
 Moreover, $\mC_{\mG}(C)$ has the 
 same rank distribution as $C$. In particular we have $\mbox{maxrk}(C)=\mbox{maxrk}(\mC_{\mG}(C))$, and if $C \neq 0$   
 then $\mbox{minrk}(C)= \mbox{minrk}(\mC_{\mG}(C))$.

\end{proposition}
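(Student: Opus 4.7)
The plan is to observe that the construction $\alpha \mapsto M_{\mG}(\alpha)$ defines an $\F_q$-linear bijection $M_{\mG}\colon \F_{q^m}^k \to \mbox{Mat}(k \times m, \F_q)$, then verify that this bijection preserves ranks pointwise, from which every assertion in the proposition follows.

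First I would check that $M_{\mG}$ is $\F_q$-linear and bijective. Linearity is immediate from the uniqueness of expansions in the basis $\mG$: if $\alpha_i = \sum_j a_{ij}\gamma_j$ and $\beta_i = \sum_j b_{ij}\gamma_j$, then for $\lambda,\mu \in \F_q$ one has $(\lambda\alpha_i + \mu\beta_i) = \sum_j (\lambda a_{ij} + \mu b_{ij})\gamma_j$, so $M_\mG(\lambda\alpha + \mu\beta) = \lambda M_\mG(\alpha) + \mu M_\mG(\beta)$. Bijectivity is clear since $M_\mG$ has an obvious inverse (read off the rows, multiply by $\mG$). Consequently $\mC_{\mG}(C) = M_\mG(C)$ is an $\F_q$-subspace of $\mbox{Mat}(k\times m, \F_q)$, i.e., a Delsarte code. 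For the dimension formula, a $\F_{q^m}$-basis of $C$ of length $\ell := \dim_{\F_{q^m}}(C)$ extends to an $\F_q$-basis of $C$ of length $m\ell$ by multiplying by the elements of $\mG$; since $M_\mG$ is an $\F_q$-linear bijection, $\dim_{\F_q}\mC_\mG(C) = m \cdot \dim_{\F_{q^m}}(C)$.

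The heart of the proof is the rank-preservation identity $\mbox{rk}(\alpha) = \mbox{rk}(M_{\mG}(\alpha))$ for every $\alpha \in \F_{q^m}^k$. By Definition \ref{assoc}, the $i$-th row of $M_{\mG}(\alpha)$ is exactly the coordinate vector of $\alpha_i$ with respect to $\mG$. Since the coordinate map $\F_{q^m} \to \F_q^m$ determined by $\mG$ is an $\F_q$-linear isomorphism, it sends $\mbox{Span}_{\F_q}\{\alpha_1,\ldots,\alpha_k\}$ bijectively and $\F_q$-linearly onto the $\F_q$-row span of $M_{\mG}(\alpha)$. Hence $\dim_{\F_q}\mbox{Span}_{\F_q}\{\alpha_1,\ldots,\alpha_k\}$ equals the row rank of $M_{\mG}(\alpha)$, which is just $\mbox{rk}(M_{\mG}(\alpha))$.

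With the rank-preservation identity in hand, the statements about rank distributions follow immediately: since $M_{\mG}\colon C \to \mC_{\mG}(C)$ is a bijection preserving rank, for every $i$ the sets $\{\alpha \in C : \mbox{rk}(\alpha)=i\}$ and $\{M \in \mC_{\mG}(C) : \mbox{rk}(M)=i\}$ have the same cardinality, so $A_i(C) = A_i(\mC_{\mG}(C))$. Taking the maximum, and the minimum positive index when $C \neq 0$, yields the equalities $\mbox{maxrk}(C) = \mbox{maxrk}(\mC_{\mG}(C))$ and $\mbox{minrk}(C) = \mbox{minrk}(\mC_{\mG}(C))$. There is no real obstacle in this proof; the only point requiring a moment of care is matching the definition of the Gabidulin rank (dimension of an $\F_q$-span of scalars in $\F_{q^m}$) with ordinary matrix rank via the coordinate isomorphism, which is exactly the content of the rank-preservation step.
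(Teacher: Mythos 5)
Your proof is correct: the paper states this proposition without proof (calling it ``immediate and well-known''), and your argument---$\F_q$-linearity and bijectivity of $M_{\mG}$, the dimension count via an $\F_q$-basis obtained by multiplying an $\F_{q^m}$-basis of $C$ by the elements of $\mG$, and rank preservation because the row span of $M_{\mG}(\alpha)$ is the image of $\mbox{Span}_{\F_q}\{\alpha_1,\ldots,\alpha_k\}$ under the coordinate isomorphism---is exactly the standard argument the paper is implicitly relying on, as hinted by its remark that the $i$-th row of $M_{\mG}(\alpha)$ is the expansion of $\alpha_i$ over $\mG$. Nothing is missing.
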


\begin{remark}
 Proposition \ref{gabtodel} shows that any Gabidulin code can be regarded as a Delsarte rank-metric code with the same cardinality
 and rank distribution. Clearly, since Gabidulin codes are $\F_{q^m}$-linear spaces and Delsarte codes
are $\F_q$-linear spaces, not all
 Delsarte rank-metric codes arise from a Gabidulin code in this way. In fact, only a few of them do. For example, a Delsarte code
 $\mC \subseteq \mbox{Mat}(k \times m, \F_q)$ such that $\dim_{\F_q}(\mC)
\not\equiv 0 \mod m$ cannot arise from a Gabidulin code.
\end{remark}

In the remainder of the section we compare the duality theories of
Delsarte and Gabidulin codes, proving in particular that the former
generalizes the latter.

\begin{remark} \label{prob}
 Given a Gabidulin code $C \subseteq \F_{q^m}^k$ and a basis $\mG$ of $\F_{q^m}$ over $\F_q$,
it is natural to ask whether the Delsarte codes $\mC_{\mG}(C^\perp)$ and 
$\mC_{\mG}(C)^\perp$ coincide or not. The answer is unfortunately negative in general, as we show 
in the following example.
\end{remark}

\begin{example}
 Let $q=3$, $k=m=2$ and
$\F_{3^2}=\F_3[\eta]$, where
$\eta$ is a root of the irreducible primitive polynomial $x^2+2x+2 \in \F_3[x]$.
Let $\xi:=\eta^2$, so that $\xi^2+1=0$.
Set $\alpha:=(\xi,2)$, and let $C \subseteq \F_{3^2}^2$ be the 1-dimensional
Gabidulin code generated by $\alpha$ over $\F_{3^2}$.  Take $\mG:= \{ 1,\xi\}$
as 
basis of $\F_{3^2}$ over $\F_3$. One can easily check that $\mC_{\mG}(C)
\subseteq \mbox{Mat}(2\times 2,\F_3)$
is generated over $\F_3$ by the two matrices 
$$M_{\mG}(\alpha)= \begin{bmatrix} 
                  0 & 1 \\ 2 & 0 
                 \end{bmatrix}, \ \ \ \ \ \ \  M_{\mG}(\xi \alpha)=
\begin{bmatrix} 
                  -1 & 0 \\ 0 & 2 
                 \end{bmatrix}.
                 $$
Let $\beta:=(\xi,1) \in \F_{3^2}^2$. We have $\langle \alpha, \beta
\rangle = 1 \neq 0$, and so $\beta \notin C^\perp$. It follows 
$M_{\mG}(\beta) \notin \mC_{\mG}(C^\perp)$.
On the other
hand, 
$$ M_\mG(\beta)=
\begin{bmatrix} 
                  0 & 1 \\ 1 & 0 
                 \end{bmatrix},$$
and it is easy to see that $M_\mG(\beta)$ is trace-orthogonal to both
$M_\mG(\alpha)$ and $M_\mG(\xi \alpha)$. It follows $M_\mG(\beta) \in
\mC_{\mG}(C)^\perp$, and so $\mC_{\mG}(C)^\perp \neq
\mC_{\mG}(C^\perp)$.
\end{example}

Although, for a fixed basis $\mG$, the duality notions for Delsarte and
Gabidulin codes do not coincide in general, we show that there is a simple
relation between them via orthogonal bases of finite fields.

\begin{definition}
 Let $\mbox{Trace}:\F_{q^m} \to \F_q$ be the $\F_q$-linear \textbf{trace} map given by
$\mbox{Trace}(\alpha):= \alpha +\alpha^q+ \cdots + \alpha^{q^{m-1}}$ for all
 $\alpha \in \F_{q^m}$.  Bases $\mG= \{ \gamma_1,...,\gamma_m\}$ and 
$\mG'= \{ \gamma'_1,...,\gamma'_m\}$ 
of $\F_{q^m}$ over
$\F_q$ are said to be \textbf{orthogonal} (or \textbf{dual}) if
 $\mbox{Trace}(\gamma'_i\gamma_j)=\delta_{ij}$ for all $i,j \in \{ 1,...,m\}$.
\end{definition}

The following result on orthogonal bases is well-known.

\begin{proposition}[\cite{niede}, page 54]
 For every basis $\mG$ of $\F_{q^m}$ over
$\F_q$ there exists a unique orthogonal basis $\mG'$.
\end{proposition}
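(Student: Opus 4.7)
The plan is to reformulate the problem as a linear-algebra statement about a non-degenerate bilinear form, and then read off both existence and uniqueness of $\mG'$ from a single isomorphism.

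First I would consider the $\F_q$-bilinear form $B: \F_{q^m} \times \F_{q^m} \to \F_q$ defined by $B(\alpha,\beta) := \textnormal{Trace}(\alpha\beta)$, where $\F_{q^m}$ is viewed as an $m$-dimensional $\F_q$-vector space. The symmetry and bilinearity of $B$ are immediate from the $\F_q$-linearity of $\textnormal{Trace}$. The key point is non-degeneracy: I would establish that if $\beta \in \F_{q^m}$ satisfies $\textnormal{Trace}(\alpha\beta)=0$ for all $\alpha \in \F_{q^m}$, then $\beta=0$. This follows from the fact that $\textnormal{Trace}$ is surjective onto $\F_q$ (a standard fact, reducible to the observation that $\textnormal{Trace}$ is a non-zero polynomial map of degree $q^{m-1}$ in $\alpha$, and hence cannot vanish on all of $\F_{q^m}$), combined with the observation that multiplication by a non-zero $\beta$ is an $\F_q$-linear bijection of $\F_{q^m}$.

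Next, given the basis $\mG = \{\gamma_1,\ldots,\gamma_m\}$, I would define the $\F_q$-linear map
$$\Phi : \F_{q^m} \to \F_q^m, \qquad \Phi(\beta) := \bigl(\textnormal{Trace}(\beta\gamma_1),\ldots,\textnormal{Trace}(\beta\gamma_m)\bigr).$$
Non-degeneracy of $B$ together with the fact that $\{\gamma_j\}$ is a basis forces $\Phi$ to be injective, and hence an isomorphism of $\F_q$-vector spaces by equality of dimensions. Existence and uniqueness of the orthogonal basis both drop out: for each $i \in \{1,\ldots,m\}$ there is a unique $\gamma'_i \in \F_{q^m}$ with $\Phi(\gamma'_i)$ equal to the $i$-th standard basis vector of $\F_q^m$, which is precisely the condition $\textnormal{Trace}(\gamma'_i\gamma_j)=\delta_{ij}$. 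Since $\Phi$ is an isomorphism carrying $\{\gamma'_1,\ldots,\gamma'_m\}$ to the standard basis of $\F_q^m$, the family $\mG' = \{\gamma'_1,\ldots,\gamma'_m\}$ is indeed an $\F_q$-basis of $\F_{q^m}$, and any other orthogonal basis must satisfy the same defining equations, hence coincide with $\mG'$ component-wise.

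The only genuinely non-trivial step is the non-degeneracy of the trace form, i.e., the surjectivity of $\textnormal{Trace}$; everything else is a direct consequence of passing through the isomorphism $\Phi$. Since this is presented in the paper as a well-known fact with a citation to \cite{niede}, the expected writeup is brief, and I would not belabor the surjectivity argument beyond a one-line justification.
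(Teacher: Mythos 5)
Your proof is correct and complete: the non-degeneracy of the trace form $B(\alpha,\beta)=\textnormal{Trace}(\alpha\beta)$ makes $\Phi$ injective, the dimension count makes it an isomorphism, and existence and uniqueness of $\mG'$ both follow by pulling back the standard basis of $\F_q^m$. The paper itself offers no proof of this proposition --- it is stated as a known fact with a citation to Lidl--Niederreiter --- and your argument is essentially the standard one found in that reference, so there is nothing to flag.
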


\begin{theorem} \label{self}
Let $C \subseteq \F_{q^m}^k$ be a Gabidulin code, and let $\mG$, $\mG'$
be orthogonal bases of $\F_{q^m}$ over $\F_q$. We have
$$\mC_{\mG'}(C^\perp)= \mC_{\mG}(C)^\perp.$$
In particular, if we set $\mC:=\mC_{\mG}(C)$, then $C$ has the same rank
distribution as $\mC$, and $C^\perp$
has the same rank distribution as $\mC^\perp$.
\end{theorem}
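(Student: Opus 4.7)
The plan is to reduce the statement to a single bilinear identity tying the $\F_{q^m}$-valued inner product to the trace product of associated matrices, and then to exploit the $\F_{q^m}$-linearity of $C$ (together with a dimension count) to turn one inclusion into equality.

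First I would establish the key identity: for every $\alpha,\beta \in \F_{q^m}^k$ and every pair of orthogonal bases $\mG=\{\gamma_1,\ldots,\gamma_m\}$, $\mG'=\{\gamma'_1,\ldots,\gamma'_m\}$,
\[
\mbox{Trace}(\langle \alpha,\beta\rangle)=\langle M_{\mG}(\alpha),M_{\mG'}(\beta)\rangle.
\]
To prove this, expand $\alpha_i=\sum_j M_{\mG}(\alpha)_{ij}\gamma_j$ and $\beta_i=\sum_l M_{\mG'}(\beta)_{il}\gamma'_l$, multiply, sum over $i$, apply $\mbox{Trace}$, and use $\mbox{Trace}(\gamma_j\gamma'_l)=\delta_{jl}$. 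The right-hand side emerges as $\sum_{i,j}M_{\mG}(\alpha)_{ij}M_{\mG'}(\beta)_{ij}=\mbox{Tr}(M_{\mG}(\alpha)M_{\mG'}(\beta)^t)$.

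Next I would prove the inclusion $\mC_{\mG'}(C^\perp)\subseteq \mC_{\mG}(C)^\perp$. If $\beta\in C^\perp$, then $\langle\alpha,\beta\rangle=0$ for every $\alpha\in C$, hence $\mbox{Trace}(\langle\alpha,\beta\rangle)=0$, and by the identity above $M_{\mG'}(\beta)$ is trace-orthogonal to every $M_{\mG}(\alpha)$, so $M_{\mG'}(\beta)\in\mC_{\mG}(C)^\perp$. For the reverse inclusion there are two natural routes; I would prefer the conceptual one using the $\F_{q^m}$-linearity of $C$: if $M_{\mG'}(\beta)\in\mC_{\mG}(C)^\perp$, then for every $\alpha\in C$ and every $\lambda\in\F_{q^m}$ we have $\lambda\alpha\in C$, whence $\mbox{Trace}(\lambda\langle\alpha,\beta\rangle)=\mbox{Trace}(\langle\lambda\alpha,\beta\rangle)=0$. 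Non-degeneracy of the trace form on $\F_{q^m}/\F_q$ then forces $\langle\alpha,\beta\rangle=0$, so $\beta\in C^\perp$. As a sanity check, Lemma \ref{proprduale} together with Proposition \ref{gabtodel} gives
\[
\dim_{\F_q}\mC_{\mG'}(C^\perp)=m(k-\dim_{\F_{q^m}}C)=km-\dim_{\F_q}\mC_{\mG}(C)=\dim_{\F_q}\mC_{\mG}(C)^\perp,
\]
so one could alternatively conclude from the first inclusion alone.

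The rank-distribution assertion is then immediate: Proposition \ref{gabtodel} says $\mC_{\mG}(C)$ has the same rank distribution as $C$, and, applied to $C^\perp$ with the basis $\mG'$, it says $\mC_{\mG'}(C^\perp)$ has the same rank distribution as $C^\perp$; combining with the equality just proved yields the statement. The main obstacle is spotting the trace identity and realising that orthogonality of the two bases is exactly what makes the double sum collapse to the matrix trace product; the rest is formal.
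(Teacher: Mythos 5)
Your proposal is correct and follows essentially the same route as the paper: the same expansion of $\langle\alpha,\beta\rangle$ over the two bases, the same application of $\mbox{Trace}$ with orthogonality collapsing the double sum to the trace product, the same inclusion $\mC_{\mG'}(C^\perp)\subseteq\mC_{\mG}(C)^\perp$, and the same dimension count via Proposition \ref{gabtodel} and Lemma \ref{proprduale} to upgrade it to equality. Your alternative argument for the reverse inclusion via $\F_{q^m}$-linearity and non-degeneracy of the trace form is a valid bonus, but the paper simply uses the dimension count, exactly as in your sanity check.
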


\begin{proof} Let $\mG=\{ \gamma_1,...,\gamma_m\}$ and $\mG'=\{ \gamma'_1,...,\gamma'_m\}$. 
 Take any $M \in \mC_{\mG'}(C^\perp)$ and $N \in
\mC_{\mG}(C)$. There exist $\alpha \in C^\perp$ and $\beta \in C$ such that
$M=M_{\mG'}(\alpha)$ and $N=M_{\mG}(\beta)$. According to Definition
\ref{assoc} we have
\begin{equation} 
 0= \langle \alpha,\beta \rangle = \sum_{i=1}^k
\alpha_i\beta_i=\sum_{i=1}^k \sum_{j=1}^m M_{ij}\gamma'_j \sum_{t=1}^m
N_{it} \gamma_t=\sum_{i=1}^k \sum_{j=1}^m \sum_{t=1}^m
M_{ij}N_{it} \gamma'_j \gamma_t.\label{eqqq}
\end{equation}
Applying the function $\mbox{Trace}:\F_{q^m} \to \F_q$ to both sides of
equation (\ref{eqqq}) we get
\begin{eqnarray*} \label{ccc}
 0 &=& \mbox{Trace} \left( \sum_{i=1}^k \sum_{j=1}^m \sum_{t=1}^m
M_{ij}N_{it} \gamma'_j \gamma_t \right) \\
 &=&  \sum_{i=1}^k \sum_{j=1}^m \sum_{t=1}^m
M_{ij}N_{it} \mbox{Trace}(\gamma'_j \gamma_t)  \\
&=&  \sum_{i=1}^k \sum_{j=1}^m \sum_{t=1}^m
M_{ij}N_{it} \delta_{jt}  \\
&=&  \sum_{i=1}^k \sum_{j=1}^m 
M_{ij}N_{ij} \\
&=& \mbox{Tr}(MN^t) \\
&=& \langle M, N \rangle.
\end{eqnarray*}
It follows $\mC_{\mG'}(C^\perp) \subseteq
\mC_{\mG}(C)^\perp$. By Proposition \ref{gabtodel} and Lemma \ref{proprduale},
$\mC_{\mG'}(C^\perp)$ and $\mC_{\mG}(C)^\perp$ have the same dimension over
$\F_q$.
Hence the two codes are equal. The second part of the statement easily follows
from Proposition \ref{gabtodel}.
\end{proof}

\begin{remark} \label{delsbattegab}
Theorem \ref{self} shows that the duality theory of Delsarte rank-metric codes can be 
regarded as a
generalization of the duality theory of Gabidulin rank-metric codes. In particular, we notice
that 
all the results on Delsarte codes which we will prove in the following sections also apply to
Gabidulin codes.
Note that a relation between the trace-product of $\mbox{Mat}(k \times m,
\F_q)$ and the standard inner product
of $\F_{q^m}^k$ involving orthogonal bases was also pointed out in
\cite{grant0}.
\end{remark}

In the remainder of the paper we focus on the general case of Delsarte rank-metric codes.

\section{MacWilliams identities for rank-metric codes} \label{sec3}
In this section we give an elementary proof of certain MacWilliams identities
for
Delsarte rank-metric codes.
MacWilliams identities for  such codes  were also obtained in \cite{del1} by
Delsarte himself using the machinery of association schemes. The formulas which
we derive are different from those of \cite{del1}, but are significantly more
straightforward. Indeed, the proof that we present is elementary and
concise, and only
employs linear algebra and a double
counting argument. In Appendix \ref{appa} we also show how the original
formulas by Delsarte can be obtained from our formulas as a corollary. A
different formulation of the identities of \cite{del1} can be found in 
\cite{GYj}.

\begin{definition} \label{gaussian}
 Let $q$ be a prime power, and let $s$ and $t$ be integers. The
$q$-\textbf{binomial coefficient} of $s$ and $t$ is denoted and defined
by
 $$\displaystyle{\begin{bmatrix} s \\ t\end{bmatrix}_q = \left\{
\begin{array}{ll}
                                         0 & \mbox{ if $s<0$, $t<0$, or $t>s$,}
\\
                                         1 & \mbox{ if $t=0$ and $s \ge 0$}, \\
                                         \prod_{i=1}^t \frac{q^{s-i+1}-1}{q^i-1}& \mbox{ otherwise.}
                                        \end{array} \right.\ }
$$
\end{definition}

 It is well-known that this number counts
the number
of $t$-dimensional $\F_q$-subspaces of an $s$-dimensional $\F_q$-space. In
particular we have $$\begin{bmatrix} s \\ t\end{bmatrix}_q = \begin{bmatrix} s
\\ s-t\end{bmatrix}_q$$ for all integers $s,t$. Since
in this paper we work with a fixed prime power $q$, we omit the subscript 
in the sequel.

\begin{remark}\label{spazio}
 Given any matrices $M,N \in \mbox{Mat}(k \times m, \F_q)$ we always have
$\mbox{colsp}(M+N) \subseteq \mbox{colsp}(M)+\mbox{colsp}(N)$. As a
consequence, 
if $U \subseteq \F_q^k$ is a vector subspace, then the set of matrices $M \in
\mbox{Mat}(k \times m, \F_q)$ 
with $\mbox{colsp}(M) \subseteq U$ is a vector subspace of $\mbox{Mat}(k
\times m, \F_q)$.
\end{remark}

\begin{notation}
 We denote the vector space $\{M \in
\mbox{Mat}(k \times m, \F_q) : \mbox{colsp}(M) \subseteq U\}$ of Remark
\ref{spazio} by $\mbox{Mat}_U(k \times m, \F_q)$.
\end{notation}

We start with a series of preliminary results.

\begin{lemma} \label{dimensioni}
 Let  $U \subseteq \F_q^k$ be a subspace.  We have $\dim_{\F_q} \mbox{Mat}_U(k
\times m, \F_q)=m \cdot \dim_{\F_q}(U)$.
\end{lemma}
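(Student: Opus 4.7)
The plan is to identify $\mathrm{Mat}_U(k\times m, \F_q)$ with the $m$-fold direct sum $U^m$ via the column decomposition of a matrix. Since the condition $\mathrm{colsp}(M)\subseteq U$ is equivalent to requiring that every column $M_i$ of $M$ lies in $U$, and since the columns of a matrix can be chosen independently of one another, the elementary linear algebra structure of the space $\mathrm{Mat}_U(k\times m, \F_q)$ should fall out immediately.

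Concretely, first I would define the $\F_q$-linear map
\[
\varphi\colon \mathrm{Mat}_U(k\times m, \F_q) \longrightarrow U^{\oplus m}, \qquad M \longmapsto (M_1, M_2, \ldots, M_m),
\]
which is well-defined precisely because $M\in \mathrm{Mat}_U(k\times m, \F_q)$ forces every column $M_i$ to belong to $U$. Next I would verify that $\varphi$ is $\F_q$-linear (trivial from the definition of addition and scalar multiplication on matrices and the fact that $(M+N)_i=M_i+N_i$), that it is injective (a matrix is determined by its columns), and that it is surjective (given any tuple $(u_1,\ldots,u_m)\in U^{\oplus m}$, the matrix with columns $u_1,\ldots,u_m$ lies in $\mathrm{Mat}_U(k\times m, \F_q)$ and maps to this tuple).

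Once $\varphi$ is established as an $\F_q$-isomorphism, the conclusion follows by computing
\[
\dim_{\F_q} \mathrm{Mat}_U(k\times m, \F_q) \;=\; \dim_{\F_q} U^{\oplus m} \;=\; m\cdot \dim_{\F_q}(U).
\]

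There is really no obstacle here: the only subtle point is the logical equivalence ``$\mathrm{colsp}(M)\subseteq U$ iff each $M_i\in U$'', which is immediate since $\mathrm{colsp}(M)$ is by definition the $\F_q$-span of the columns $M_1,\ldots,M_m$, and a span is contained in a subspace exactly when all the spanning vectors are. I would just state this equivalence and proceed. The lemma is short enough that the proof should occupy only a few lines.
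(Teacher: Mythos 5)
Your proof is correct, and it takes a genuinely different (and more direct) route than the paper. You identify $\mbox{Mat}_U(k\times m,\F_q)$ with the $m$-fold direct sum $U^{\oplus m}$ by sending a matrix to its tuple of columns, using the elementary fact that $\mbox{colsp}(M)\subseteq U$ if and only if every column of $M$ lies in $U$; the dimension count is then immediate. The paper instead argues by change of coordinates: it chooses an invertible $G\in\mbox{Mat}(k\times k,\F_q)$ carrying $U$ onto the coordinate subspace $V=\{x\in\F_q^k : x_i=0 \mbox{ for } i>s\}$, observes that $M\mapsto GM$ is an isomorphism $\mbox{Mat}_U(k\times m,\F_q)\to\mbox{Mat}_V(k\times m,\F_q)$, and computes the dimension of the latter as the space of matrices whose last $k-s$ rows vanish. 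Your argument is shorter and basis-free, and it makes the product structure of $\mbox{Mat}_U(k\times m,\F_q)$ explicit; the paper's argument has the mild advantage of normalizing $U$ to a concrete coordinate subspace, a left-and-right multiplication technique that reappears later in the paper (e.g., in the proof of Proposition \ref{criterio}). Either proof is complete and acceptable.
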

\begin{proof}
 Let $s:=\dim_{\F_q}(U)$. Define the $s$-dimensional  space $V:=\{
x \in \F_q^k : x_i=0 \mbox{ for } i >s \} \subseteq \F_q^k$. There exists an
$\F_q$-isomorphism $g:\F_q^k \to \F_q^k$ that maps $U$ into $V$. Let $G \in
\mbox{Mat}(k \times k, \F_q)$ be the invertible matrix associated to $g$ with
respect to the canonical
basis $\{e_1,...,e_k\}$ of $\F_q^k$, i.e.,
$$g(e_j) = \sum_{i=1}^k G_{ij} e_i \ \ \ \ \ \mbox{ for all }
j=1,...,k.$$ 
For any matrix $M \in \mbox{Mat}(k\times m, \F_q)$ we
have
$g(\mbox{colsp}(M))=\mbox{colsp}(GM)$, and it is easy to check that the map
$M \mapsto GM$ is an $\F_q$-isomorphism
$\mbox{Mat}_U(k\times m, \F_q) \to \mbox{Mat}_V(k\times m, \F_q)$. Now we observe that 
$\mbox{Mat}_V(k\times m, \F_q)$
is the vector space of matrices $M \in \mbox{Mat}(k\times m, \F_q)$ whose
last $k-s$ rows equal zero.
Hence $\dim_{\F_q} \mbox{Mat}_V(k\times m, \F_q)=km-m(k-s)=ms$, and the lemma
follows.
\end{proof}

\begin{lemma} \label{iduali}
 Let  $U \subseteq \F_q^k$ be a subspace. 
 We have $\mbox{Mat}_U(k \times m, \F_q)^\perp=\mbox{Mat}_{U^\perp}(k \times m, \F_q)$.
\end{lemma}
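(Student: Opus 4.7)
The plan is to prove the two inclusions, one by a direct computation using the column decomposition of the trace product, the other by a dimension count using results already in hand.

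First I would establish the inclusion $\mbox{Mat}_{U^\perp}(k \times m, \F_q) \subseteq \mbox{Mat}_U(k \times m, \F_q)^\perp$. Take $M \in \mbox{Mat}_U(k \times m, \F_q)$ and $N \in \mbox{Mat}_{U^\perp}(k \times m, \F_q)$. By definition every column $M_i$ lies in $U$ and every column $N_i$ lies in $U^\perp$, hence $\langle M_i, N_i \rangle = 0$ for each $i = 1, \dots, m$. Summing over $i$ and invoking Lemma \ref{rel}, we get $\langle M, N \rangle = \sum_{i=1}^m \langle M_i, N_i \rangle = 0$, which gives the claimed inclusion.

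Next I would close the argument by showing the two $\F_q$-subspaces have the same dimension. By Lemma \ref{proprduale} applied to the code $\mbox{Mat}_U(k \times m, \F_q)$ together with Lemma \ref{dimensioni}, we have
\begin{equation*}
\dim_{\F_q} \mbox{Mat}_U(k \times m, \F_q)^\perp = km - m \dim_{\F_q}(U) = m(k - \dim_{\F_q}(U)) = m \dim_{\F_q}(U^\perp),
\end{equation*}
where the last equality uses the standard fact that $\dim_{\F_q}(U) + \dim_{\F_q}(U^\perp) = k$. On the other hand Lemma \ref{dimensioni} applied to $U^\perp$ gives $\dim_{\F_q} \mbox{Mat}_{U^\perp}(k \times m, \F_q) = m \dim_{\F_q}(U^\perp)$. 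Since a subspace contained in another subspace of equal finite dimension must coincide with it, the reverse inclusion and hence the equality follow.

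There is no real obstacle here: the argument hinges only on the column-wise expression of the trace product from Lemma \ref{rel} and the dimension formula from Lemma \ref{dimensioni}. The only thing to be slightly careful about is to verify that $\mbox{Mat}_U(k \times m, \F_q)$ is indeed an $\F_q$-subspace so that Lemma \ref{proprduale} applies, but this is exactly the content of Remark \ref{spazio}.
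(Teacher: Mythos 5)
Your proposal is correct and follows essentially the same route as the paper: the inclusion $\mbox{Mat}_{U^\perp}(k \times m, \F_q) \subseteq \mbox{Mat}_U(k \times m, \F_q)^\perp$ via the column-wise expansion of the trace product (Lemma \ref{rel}), followed by a dimension count using Lemmas \ref{dimensioni} and \ref{proprduale}. You merely spell out the dimension computation a bit more explicitly than the paper does.
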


\begin{proof}
 Let $N \in \mbox{Mat}_{U^\perp}(k \times m, \F_q)$ and $M \in \mbox{Mat}_U(k \times m, \F_q)$. By definition, each
 column of $N$ belongs to $U^\perp$, and
each column of $M$ belongs to
$U$. Hence by Lemma \ref{rel} we have
 $$\langle M, N \rangle =\sum_{i=1}^m  \langle M_i, N_i \rangle =0.$$ 
This proves $\mbox{Mat}_{U^\perp}(k \times m, \F_q) \subseteq \mbox{Mat}_U(k
\times m, \F_q)^\perp$. By Lemma \ref{dimensioni}, the two spaces of matrices 
$\mbox{Mat}_{U^\perp}(k \times m, \F_q)$ and $\mbox{Mat}_U(k
\times m, \F_q)^\perp$ have the same dimension over $\F_q$. Hence they are
equal.
\end{proof}

\begin{lemma} \label{tecn}
 Let $\mC \subseteq \mbox{Mat}(k \times m, \F_q)$ be a code, and let $U
\subseteq \F_q^k$ be a subspace. Denote by $s$ the
 dimension of $U$ over $\F_q$.
  We have
 $$|\mC \cap \mbox{Mat}_U(k \times m, \F_q)|= \frac{|\mC|}{ q^{m(k-s)}}
|\mC^\perp \cap \mbox{Mat}_{U^\perp}(k \times m, \F_q)|.$$
\end{lemma}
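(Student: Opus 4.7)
The plan is to translate the multiplicative cardinality identity into an additive dimension identity, using the two preceding lemmas of this section together with the elementary properties of duality collected in Lemma \ref{proprduale}. Set $s := \dim_{\F_q}(U)$ and write $\mA := \mbox{Mat}_U(k \times m, \F_q)$ for brevity. By Lemma \ref{dimensioni} we have $\dim_{\F_q}(\mA^\perp) = m(k-s)$, so taking $\F_q$-dimensions reduces the claim to the identity
$$\dim_{\F_q}(\mC \cap \mA) - \dim_{\F_q}(\mC) = \dim_{\F_q}(\mC^\perp \cap \mA^\perp) - m(k-s).$$

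First I would apply Lemma \ref{iduali}, which gives $\mA^\perp = \mbox{Mat}_{U^\perp}(k \times m, \F_q)$, together with the third item of Lemma \ref{proprduale} to obtain the identity $\mC^\perp \cap \mA^\perp = (\mC + \mA)^\perp$. Combined with the second item of Lemma \ref{proprduale}, this yields
$$\dim_{\F_q}(\mC^\perp \cap \mA^\perp) = km - \dim_{\F_q}(\mC + \mA).$$

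Next I would invoke the Grassmann dimension formula
$$\dim_{\F_q}(\mC \cap \mA) = \dim_{\F_q}(\mC) + \dim_{\F_q}(\mA) - \dim_{\F_q}(\mC + \mA),$$
substitute $\dim_{\F_q}(\mA) = ms$ from Lemma \ref{dimensioni}, and eliminate $\dim_{\F_q}(\mC + \mA)$ using the previous expression. The target dimension identity then falls out immediately, and exponentiating with base $q$ gives the lemma.

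There is no genuine obstacle: once Lemmas \ref{dimensioni} and \ref{iduali} are in place, the identity is pure bookkeeping with the Grassmann formula and the duality properties in Lemma \ref{proprduale}. The single substantive step is recognising, via Lemmas \ref{iduali} and \ref{proprduale}, that the intersection $\mC^\perp \cap \mbox{Mat}_{U^\perp}(k\times m,\F_q)$ is the orthogonal complement of $\mC + \mA$, which is exactly what lets the factor $q^{m(k-s)}$ emerge. A double-counting variant would work equally well: counting the pairs $(M,N) \in \mC \times \mA^\perp$ with $\langle M, N \rangle = 0$ in two ways, and splitting each count according to whether the fixed factor lies in the relevant orthogonal subspace, produces the same identity, in keeping with the author's announcement of an elementary proof relying only on linear algebra and a double-counting argument.
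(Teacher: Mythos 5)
Your proof is correct and is essentially the paper's own argument: both rest on Lemma \ref{iduali} combined with Lemma \ref{proprduale} to identify $\mC^\perp \cap \mbox{Mat}_{U^\perp}(k\times m,\F_q)$ with the orthogonal complement of $\mC + \mbox{Mat}_U(k\times m,\F_q)$ (the paper states the dual version $(\mC\cap\mbox{Mat}_U)^\perp = \mC^\perp + \mbox{Mat}_{U^\perp}$), and then finish with the Grassmann dimension formula and Lemma \ref{dimensioni}. The only difference is cosmetic --- you work additively with dimensions and apply Grassmann on the primal side, while the paper works multiplicatively with cardinalities and applies Grassmann to the sum $\mC^\perp + \mbox{Mat}_{U^\perp}(k\times m,\F_q)$.
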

\begin{proof}
 Combining Lemma \ref{proprduale} and Lemma \ref{iduali} we obtain
 $$(\mC \cap \mbox{Mat}_U(k \times m, \F_q))^\perp=\mathcal{\mC}^\perp
+\mbox{Mat}_U(k\times m, \F_q)^\perp=\mC^\perp + \mbox{Mat}_{U^\perp}(k \times
m, \F_q).$$ Hence by Lemma \ref{proprduale} we have
 \begin{equation} \label{eq2}
 |\mC \cap \mbox{Mat}_U(k \times m, \F_q)| \cdot |\mC^\perp + \mbox{Mat}_{U^\perp}(k \times m, \F_q)|=q^{km}.
 \end{equation}
 On the other hand, Lemma \ref{dimensioni} gives
 $$\dim_{\F_q}(\mC^\perp + \mbox{Mat}_{U^\perp}(k \times m,
\F_q))=\dim_{\F_q}(\mC^\perp)+m \cdot \dim_{\F_q}U^\perp-\dim_{\F_q} 
 (\mC^\perp \cap \mbox{Mat}_{U^\perp}(k \times m, \F_q)),$$ and so, again by
Lemma \ref{proprduale},
  \begin{equation}\label{eq3}
|\mC^\perp + \mbox{Mat}_{U^\perp}(k \times m, \F_q)| = \frac{q^{km} \cdot
q^{m(k-s)}}{|\mC| \cdot
|\mC^\perp \cap
\mbox{Mat}_{U^\perp}(k \times m, \F_q)|}.
 \end{equation}
 Combining equation (\ref{eq2}) and equation (\ref{eq3}) one easily obtains the
lemma.
\end{proof}

The following result is well-known, but we include it for completeness.

\begin{lemma} \label{facile}
 Let $0 \le t,s \le k$ be integers, and let $X \subseteq \F_q^k$ be a subspace
of dimension $t$ over $\F_q$. The number of subspaces $U \subseteq \F_q^k$ such
that $X
\subseteq U$ and $\dim_{\F_q}(U)=s$ is
$$\begin{bmatrix} k-t \\ s-t
\end{bmatrix}.$$
\end{lemma}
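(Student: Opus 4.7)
The plan is to reduce the count to the standard interpretation of the $q$-binomial coefficient as counting subspaces of a fixed dimension inside an ambient $\F_q$-vector space, as recorded immediately after Definition \ref{gaussian}. For this, the natural device is to pass to the quotient $\F_q^k/X$.

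First I would note that if $s<t$, there is nothing to prove: no $U$ of dimension $s$ can contain an $X$ of dimension $t$, and correspondingly $\begin{bmatrix} k-t \\ s-t\end{bmatrix}=0$ by the first clause of Definition \ref{gaussian}. Assume now $t \le s \le k$, and let $\pi: \F_q^k \to \F_q^k/X$ be the canonical projection. The quotient $\F_q^k/X$ is an $\F_q$-vector space of dimension $k-t$.

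The key step is to establish a bijection between the set
\[
\mathcal{U} := \{U \subseteq \F_q^k : X \subseteq U, \ \dim_{\F_q}(U)=s\}
\]
and the set
\[
\mathcal{V} := \{V \subseteq \F_q^k/X : \dim_{\F_q}(V)=s-t\}
\]
via $U \mapsto \pi(U) = U/X$, with inverse $V \mapsto \pi^{-1}(V)$. This is the standard correspondence theorem for subspaces containing $X$; one only needs to verify that $\dim_{\F_q}(U/X)=s-t$ under the assumption $X \subseteq U$, which is immediate from the dimension formula for a quotient.

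Once the bijection is in place, the result follows at once from the interpretation of the $q$-binomial coefficient recalled after Definition \ref{gaussian}: the number of $(s-t)$-dimensional subspaces of a $(k-t)$-dimensional $\F_q$-vector space is $\begin{bmatrix} k-t \\ s-t\end{bmatrix}$. There is no real obstacle here; the only subtlety is making the boundary cases ($s<t$ or $s>k$) consistent with the conventions in Definition \ref{gaussian}, which the case-split at the start handles.
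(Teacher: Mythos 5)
Your proof is correct and follows essentially the same route as the paper: both pass to the quotient $\F_q^k/X$ and use the correspondence between $s$-dimensional subspaces containing $X$ and $(s-t)$-dimensional subspaces of the $(k-t)$-dimensional quotient. Your extra care with the degenerate case $s<t$ is a harmless refinement of the same argument.
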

\begin{proof}
 Let $\pi: \F_q^k \to \F_q^k/X$ denote the projection on the quotient vector
space $\F_q^k$ modulo $X$. It is easy to see that $\pi$ induces a bijection
between the $s$-dimensional vector subspaces of $\F_q^k$ containing $X$
and the $(s-t)$-dimensional subspaces of $\F_q^k/X$. The lemma follows from the
fact that $\F_q^k/X$ has dimension $k-t$.
\end{proof}

\begin{lemma} \label{subitoprima}
 Let $\mC \subseteq \mbox{Mat}(k \times m, \F_q)$ be a code. Denote by
 $(A_i)_{i \in \N}$ the rank distribution of $\mC$. Let $0 \le s \le k$ be an integer.
 We have
 $$\sum_{\substack{U \subseteq \F_q^k \\ \dim_{\F_q}(U)=s}} |\mC \cap \mbox{Mat}_U(k \times m, \F_q)| = 
 \sum_{i=0}^k A_i \begin{bmatrix} k-i \\ k-s \end{bmatrix}.$$
\end{lemma}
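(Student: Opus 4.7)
The plan is a straightforward double counting argument. Consider the set
$$S := \{ (M, U) : M \in \mC,\ U \subseteq \F_q^k \text{ subspace with } \dim_{\F_q}(U) = s,\ \mbox{colsp}(M) \subseteq U \}.$$
I will compute $|S|$ in two ways, grouping first by the subspace $U$ and then by the matrix $M$.

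First, fixing an $s$-dimensional subspace $U \subseteq \F_q^k$, the matrices $M \in \mC$ that appear in a pair with $U$ are precisely those with $\mbox{colsp}(M) \subseteq U$, i.e., the elements of $\mC \cap \mbox{Mat}_U(k \times m, \F_q)$. Summing over $U$ gives the left-hand side of the identity.

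Next, fixing $M \in \mC$, the $U$'s that appear in a pair with $M$ are the $s$-dimensional subspaces of $\F_q^k$ containing $\mbox{colsp}(M)$. If $\mbox{rk}(M) = i$, then $\mbox{colsp}(M)$ has dimension $i$, and by Lemma \ref{facile} the number of such $U$ equals $\begin{bmatrix} k-i \\ s-i \end{bmatrix}$, which by the symmetry $\begin{bmatrix} s \\ t \end{bmatrix} = \begin{bmatrix} s \\ s-t \end{bmatrix}$ of the $q$-binomial coefficient is equal to $\begin{bmatrix} k-i \\ k-s \end{bmatrix}$. Note that this quantity automatically vanishes when $i > s$ or $i > k$, consistent with Lemma \ref{facile}. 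Grouping the matrices $M \in \mC$ according to their rank yields the right-hand side $\sum_{i=0}^k A_i \begin{bmatrix} k-i \\ k-s \end{bmatrix}$.

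Equating the two counts proves the identity. There is no real obstacle here: the only subtlety is the index manipulation in the $q$-binomial and checking that the conventions for extreme values of $i$ (in particular $i = 0$ and $i > s$) agree on both sides, which they do by Definition \ref{gaussian} and the remark after the definition of rank distribution.
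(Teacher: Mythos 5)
Your proof is correct and is essentially identical to the paper's: both count the pairs $(U,M)$ with $\mbox{colsp}(M)\subseteq U$ in two ways, apply Lemma \ref{facile} when grouping by $M$, and use the symmetry $\begin{bmatrix} k-i \\ s-i \end{bmatrix}=\begin{bmatrix} k-i \\ k-s \end{bmatrix}$. Nothing is missing.
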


\begin{proof}
 Define the set $\mathcal{A}(\mC,s) := \{ (U,M) : U \subseteq \F_q^k, \ \dim(U)=s, \ M
\in \mC, \ \mbox{colsp}(M) \subseteq U \}$.
We will count the elements of $\mA(\mC,s)$ in two different ways.
On the one
hand, using  Lemma \ref{facile}, we have

\begin{eqnarray*}
 |\mA(\mC,s)| &=& \sum_{M \in \mC} |\{ U \subseteq \F_q^k, \ \dim(U)=s, \
\mbox{colsp}(M)
\subseteq U \}| \\
&=&  \sum_{i=0}^k \sum_{\substack{M \in \mC \\ \mbox{\small{rk}}(M)=i}} |\{ U
\subseteq \F_q^k, \ \dim(U)=s, \ \mbox{colsp}(M)
\subseteq U \}| \\
&=&  \sum_{i=0}^k \sum_{\substack{M \in \mC \\ \mbox{\small{rk}}(M)=i}}
\begin{bmatrix} k-i \\ s-i \end{bmatrix}  \; = \; \sum_{i=0}^{k} A_i
\begin{bmatrix} k-i \\ s-i \end{bmatrix} = \; \sum_{i=0}^{k} A_i
\begin{bmatrix} k-i \\ k-s \end{bmatrix}.
\end{eqnarray*}
On the other hand, 
$$
|\mA(\mC,s)| = \sum_{\substack{U \subseteq \F_q^k \\ \dim(U) = s}} |\{ M \in
\mC : \mbox{colsp}(M) \subseteq U \}| = \sum_{\substack{U \subseteq \F_q^k \\
\dim(U)=s}} |\mC \cap \mbox{Mat}_U(k \times m, \F_q)|,
$$
and the lemma follows.
\end{proof}

Now we state our main result.

\begin{theorem} \label{theo}
 Let $\mathcal{C} \subseteq \mbox{Mat}(k \times m, \F_q)$ be a code. Let
${(A_i)}_{i \in \N}$ and ${(B_j)}_{j \in \N}$ be the rank distributions of $\mC$ and $\mC^\perp$,
respectively. For any integer $0 \le \nu \le k$ we have
$$\sum_{i=0}^{k-\nu} A_i \begin{bmatrix} k-i \\ \nu \end{bmatrix} = \;
\frac{|\mC|}{q^{m\nu}} \; \sum_{j=0}^{\nu} B_j \begin{bmatrix} k-j \\ \nu-j
\end{bmatrix}.$$
\end{theorem}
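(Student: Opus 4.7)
The plan is to combine Lemma \ref{subitoprima} with Lemma \ref{tecn}, using the bijection $U \mapsto U^\perp$ between subspaces of complementary dimensions of $\F_q^k$.

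First I would apply Lemma \ref{subitoprima} with $s = k - \nu$ to rewrite the left-hand side: after noting that $\begin{bmatrix} k-i \\ \nu \end{bmatrix} = 0$ whenever $i > k-\nu$, we have
$$\sum_{i=0}^{k-\nu} A_i \begin{bmatrix} k-i \\ \nu \end{bmatrix} = \sum_{i=0}^{k} A_i \begin{bmatrix} k-i \\ k-s \end{bmatrix} = \sum_{\substack{U \subseteq \F_q^k \\ \dim_{\F_q}(U) = k-\nu}} |\mC \cap \mbox{Mat}_U(k \times m, \F_q)|.$$
Next I would substitute Lemma \ref{tecn} (with $s = k - \nu$) into each summand, pulling the common factor $|\mC|/q^{m\nu}$ outside the sum to obtain
$$\sum_{i=0}^{k-\nu} A_i \begin{bmatrix} k-i \\ \nu \end{bmatrix} = \frac{|\mC|}{q^{m\nu}} \sum_{\substack{U \subseteq \F_q^k \\ \dim_{\F_q}(U) = k-\nu}} |\mC^\perp \cap \mbox{Mat}_{U^\perp}(k \times m, \F_q)|.$$

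The key observation is then that the map $U \mapsto U^\perp$ is a bijection between the set of $(k-\nu)$-dimensional subspaces of $\F_q^k$ and the set of $\nu$-dimensional subspaces of $\F_q^k$. Reindexing the sum by $V := U^\perp$ and applying Lemma \ref{subitoprima} to $\mC^\perp$ with parameter $s = \nu$ yields
$$\sum_{\substack{V \subseteq \F_q^k \\ \dim_{\F_q}(V) = \nu}} |\mC^\perp \cap \mbox{Mat}_V(k \times m, \F_q)| = \sum_{j=0}^k B_j \begin{bmatrix} k-j \\ k-\nu \end{bmatrix}.$$
Finally, using the symmetry $\begin{bmatrix} k-j \\ k-\nu \end{bmatrix} = \begin{bmatrix} k-j \\ \nu - j \end{bmatrix}$ for $0 \le j \le \nu$ (and the vanishing of this coefficient for $j > \nu$, which truncates the sum), we recover the desired right-hand side.

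The argument is essentially routine bookkeeping once the three ingredients (Lemmas \ref{tecn} and \ref{subitoprima}, plus the involution $U \leftrightarrow U^\perp$ on the Grassmannian) are in place; the only delicate point is the index gymnastics with the $q$-binomial coefficients, namely verifying that the truncation of the summation ranges on both sides is consistent with the identity $\begin{bmatrix} s \\ t \end{bmatrix} = \begin{bmatrix} s \\ s-t \end{bmatrix}$ and with the convention that the coefficient vanishes when $t > s$ or $t < 0$.
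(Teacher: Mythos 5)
Your proposal is correct and follows essentially the same route as the paper's own proof: both use Lemma \ref{subitoprima} applied to $\mC$ with $s=k-\nu$ and to $\mC^\perp$ with $s=\nu$, Lemma \ref{tecn} with $s=k-\nu$, the bijection $U\mapsto U^\perp$ on subspaces of complementary dimension, and the vanishing of the $q$-binomial coefficients to truncate the summation ranges. The only difference is cosmetic (you substitute Lemma \ref{tecn} termwise before reindexing, while the paper assembles the two counting identities first), so there is nothing to add.
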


\begin{proof}
  Lemma \ref{subitoprima} applied to $\mC$ with $s=k-\nu$ gives
 $$\sum_{\substack{U \subseteq \F_q^k \\ \dim_{\F_q}(U)=k-\nu}} |\mC \cap \mbox{Mat}_U(k \times m, \F_q)| \; = \; 
 \sum_{i=0}^k A_i \begin{bmatrix} k-i \\ \nu \end{bmatrix}.$$
 The map $U \mapsto U^\perp$ is a bijection between the $\nu$-dimensional and
  the $(k-\nu)$-dimensional subspaces of $\F_q^k$. Hence we have
 $$\sum_{\substack{U \subseteq \F_q^k \\ \dim_{\F_q}(U)=k-\nu}} |\mC^\perp \cap \mbox{Mat}_{U^\perp}(k \times m, \F_q)| \; =  
   \sum_{\substack{U \subseteq \F_q^k \\ \dim_{\F_q}(U)=\nu}} |\mC^\perp \cap \mbox{Mat}_{U}(k \times m, \F_q)| 
  \; = \; \sum_{j=0}^k B_j \begin{bmatrix} k-j \\ k-\nu \end{bmatrix},$$
where the second equality follows from Lemma \ref{subitoprima} applied
 to the code $\mC^\perp$ with $s=\nu$.
 Lemma \ref{tecn} with $s=k-\nu$ gives
$$\sum_{i=0}^{k} A_i \begin{bmatrix} k-i \\ \nu \end{bmatrix} = \;
\frac{|\mC|}{q^{m\nu}} \; \sum_{j=0}^{k} B_j \begin{bmatrix} k-j \\ \nu-j
\end{bmatrix}.$$
By definition, for $i > k-\nu$ and for $j > \nu$ we have
$$\begin{bmatrix} k-i \\ \nu \end{bmatrix}=\begin{bmatrix} k-j \\ \nu-j
\end{bmatrix}=0,$$ and the theorem follows.
\end{proof}

\begin{remark}
 Theorem \ref{theo} can be regarded as the $q$-analog of Lemma 2.2 of \cite{macsys},
 which yields analogous  identities for
the Hamming metric.
\end{remark}

Theorem \ref{theo} produces in particular MacWilliams-type identities that
relate the rank
distribution of a dual code $\mC^\perp$ to the rank distribution
of $\mC$. The following result gives a recursive method to compute the
rank distribution of $\mC^\perp$ from the rank distribution of $\mC$.

\begin{corollary} \label{formule}
 Let $\mathcal{C} \subseteq \mbox{Mat}(k \times m, \F_q)$ be a code.
 Let
${(A_i)}_{i \in \N}$ and ${(B_j)}_{j \in \N}$ be the rank distributions of $\mC$ and $\mC^\perp$,
respectively. For $\nu=0,...,k$ define
$$ a_{\nu}^k:= \frac{q^{m\nu}}{|\mC|} \; \sum_{i=0}^{k-\nu} A_i
\begin{bmatrix} k-i \\ \nu \end{bmatrix}.$$
The $B_j$'s are given by the recursive formula

$$
\left\{ \begin{array}{l}
           B_0=1, \\ B_{\nu}= a_\nu^k - \mathlarger{\sum}_{j=0}^{\nu-1} B_j
\begin{bmatrix}
                                                             k-j \\ \nu-j
                                                            \end{bmatrix} \ \mbox{ for } \nu=1,...,k, \\
                                                            B_\nu=0 \ \mbox{ for } \nu > k.

          \end{array}
\right.\ 
$$
\end{corollary}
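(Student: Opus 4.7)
The plan is to read the recursion directly off the identity of Theorem \ref{theo}, after isolating the diagonal term $B_\nu$ in the right-hand side.

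First I would rewrite Theorem \ref{theo}, using the definition of $a_\nu^k$, in the compact form
$$a_\nu^k \;=\; \sum_{j=0}^{\nu} B_j \begin{bmatrix} k-j \\ \nu-j \end{bmatrix}$$
for every $0 \le \nu \le k$. The base case $B_0 = 1$ is immediate: the zero matrix is the unique rank-$0$ element of any $\F_q$-linear code, and in particular of $\mC^\perp$. Alternatively, plugging $\nu = 0$ into the displayed identity gives $a_0^k = B_0$, and a direct evaluation of $a_0^k$ from its definition yields $a_0^k = (q^0/|\mC|)\sum_i A_i = |\mC|/|\mC| = 1$, confirming $B_0 = 1$.

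Next, for $1 \le \nu \le k$, I would split off the $j = \nu$ summand on the right-hand side. Since $\begin{bmatrix} k-\nu \\ 0 \end{bmatrix} = 1$, this coefficient of $B_\nu$ is exactly $1$, so
$$a_\nu^k \;=\; B_\nu \;+\; \sum_{j=0}^{\nu-1} B_j \begin{bmatrix} k-j \\ \nu-j \end{bmatrix},$$
and solving for $B_\nu$ gives the stated recursion. Finally, for $\nu > k$, the rank of any $k\times m$ matrix is at most $\min\{k,m\} \le k$, so no element of $\mC^\perp$ has rank greater than $k$, and therefore $B_\nu = 0$.

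There is no real obstacle here; the content of the corollary is purely the observation that the linear system relating the $B_j$'s to the $a_\nu^k$'s is lower-triangular with unit diagonal entries $\begin{bmatrix} k-\nu \\ 0 \end{bmatrix} = 1$, and hence can be inverted recursively. All the substance lies in Theorem \ref{theo}, which has already been proved.
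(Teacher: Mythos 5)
Your proposal is correct and follows essentially the same route as the paper: it rewrites Theorem \ref{theo} as $a_\nu^k = \sum_{j=0}^{\nu} B_j \begin{bmatrix} k-j \\ \nu-j \end{bmatrix}$, isolates the $j=\nu$ term with unit coefficient, and solves for $B_\nu$, with the boundary cases $B_0=1$ and $B_\nu=0$ for $\nu>k$ handled exactly as in the paper's (even terser) argument.
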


\begin{proof}
 Clearly, $B_0=1$ and $B_\nu=0$ for $\nu >k$. For any fixed integer $\nu \in \{1,...,k\}$ Theorem \ref{theo} gives
 $$a_\nu^k = \sum_{j=0}^{\nu-1} B_j \begin{bmatrix} k-j \\ \nu-j\end{bmatrix} +
B_\nu,$$
 which proves the result.
\end{proof}

\begin{remark}
 Identities in the form of Theorem \ref{theo} are usually called ``moments of
MacWilliams
identities'' rather than ``MacWilliams identities''. For convenience, in this
paper we will call 
``MacWilliams identities'' both the identities of Theorem \ref{theo} and Corollary 
\ref{formule}. 
\end{remark}

\begin{remark}
 We notice that Theorem \ref{theo} implies 
Theorem 3.3 of \cite{del1} as a corollary (see Appendix \ref{appa} for details),
producing
MacWilliams 
identities for Delsarte codes in an explicit form employing an elementary
argument.
\end{remark}

\begin{remark}
 Identities in the form of Theorem \ref{theo}
were recently proved for Gabidulin
  codes (see \cite{gadu}, Proposition 3). The proof of \cite{gadu} is based on
the Hadamard transform, $q$-products, $q$-derivatives 
 and $q$-transforms of polynomials. In \cite{GYj}, Corollary 1 and Proposition
3, the authors
show that such identities also apply to Delsarte codes. Their proof is based on
the results of \cite{del1} by Delsarte.
\end{remark}

Theorem \ref{theo} and Corollary \ref{formule} allow us to 
re-establish the main results
of the duality theory of rank-metric codes in a very concise way.

\begin{corollary} \label{coro1}
 The rank distribution of a code $\mC$  determines the rank distribution
of the dual code $\mC^\perp$.
\end{corollary}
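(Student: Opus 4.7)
The plan is to obtain this as an immediate consequence of the recursive formula established in Corollary \ref{formule}. The key observation is that, in the formula defining the auxiliary quantities
$$a_\nu^k = \frac{q^{m\nu}}{|\mC|}\sum_{i=0}^{k-\nu} A_i \begin{bmatrix} k-i \\ \nu \end{bmatrix},$$
the only data involved are the integers $k$, $m$, $q$ (which are fixed parameters of the ambient space $\mbox{Mat}(k\times m,\F_q)$), and the rank distribution $(A_i)_{i\in\N}$ of $\mC$ itself. In particular, $|\mC|$ is determined by the rank distribution, since $|\mC|=\sum_{i=0}^{k} A_i$.

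Given this, I would argue by induction on $\nu$ that each $B_\nu$ is determined by the rank distribution of $\mC$. For the base case, Corollary \ref{formule} states $B_0=1$, independent of $\mC$. For the inductive step, assume $B_0,B_1,\dots,B_{\nu-1}$ have been shown to depend only on the $A_i$'s. Then the identity
$$B_\nu = a_\nu^k - \sum_{j=0}^{\nu-1} B_j \begin{bmatrix} k-j \\ \nu-j \end{bmatrix}$$
expresses $B_\nu$ in terms of $a_\nu^k$ (which depends only on the $A_i$'s) and of $B_0,\dots,B_{\nu-1}$ (which by induction depend only on the $A_i$'s). Hence $B_\nu$ is itself a function of the rank distribution of $\mC$. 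Finally, for $\nu>k$ the values $B_\nu=0$ are fixed.

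There is no real obstacle in this proof; it is a direct unwinding of the recursion of Corollary \ref{formule}. The only thing worth emphasizing in the write-up is that the recursion is well-posed, in the sense that $B_\nu$ is computed purely from $a_\nu^k$ and from the previously determined $B_j$'s, so the dependence on the rank distribution of $\mC$ propagates cleanly through the induction.
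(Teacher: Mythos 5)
Your proof is correct and follows exactly the paper's approach: the paper also deduces Corollary \ref{coro1} immediately from the recursion of Corollary \ref{formule}. Your additional remarks (that $|\mC|=\sum_i A_i$ and that the recursion is well-posed) are just a more explicit unwinding of the same argument.
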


\begin{proof}
 This immediately follows from Corollary \ref{formule}.
\end{proof}

\begin{remark}
 Corollary \ref{coro1} was first proved by Delsarte using the theory of
association schemes. See \cite{del1}, Theorem 3.3 for details.
\end{remark}

\begin{example}
 Let $q=3$, $k=3$, $m=4$. Consider the code $\mC \subseteq \mbox{Mat}(3\times
4,\F_3)$ generated by the following three matrices:
$$ \begin{bmatrix}  
    1 & 2 & 0 & 0 \\ 0 & 1 & 0 & 0 \\ 0 & 0 & 2 & 1
   \end{bmatrix}, \ \ \ \ \begin{bmatrix}  
    0 & 2 & 0 & 0 \\ 0 & 0 & 1 & 2 \\ 1 & 1 & 0 & 0
   \end{bmatrix}, \ \ \ \ \begin{bmatrix}  
    0 & 2 & 0 & 0 \\ 0 & 0 & 1 & 2 \\ 1 & 1 & 1 & 1
   \end{bmatrix}.$$
It can be checked that $\dim_{\F_3} \mC=3$ and that the rank distribution of
$\mC$ is
$A_0=1$, $A_1=2$, $A_2=0$, $A_3=24$. If ${(B_j)}_{j \in \N}$ denotes the rank
distribution of $\mC^\perp$, then the recursive formula of
Corollary \ref{formule} allows us to compute:
$$B_0=1, \ \ \ B_1=50, \ \ \ B_2=3432, \ \ \ B_3= 16200.$$
Notice that $\sum_{i=0}^3 B_i=19683=3^9=|\mC^\perp|$, as expected.
\end{example}

\begin{remark}
 For a code $\mC \subseteq \mbox{Mat}(k\times m, \F_q)$ define
 $\mC^t:= \{ M^t : M \in \mC\} \subseteq \mbox{Mat}(m\times k, \F_q)$. Clearly,
$\mC$ and $\mC^t$ have the same dimension and rank distribution. Moreover, one can
check that $(\mC^t)^\perp = (\mC^\perp)^t$. As a consequence, up to a
transposition, without loss of generality in the sequel we will always assume $k\le
m$ in the proofs of our results.
\end{remark}

\begin{corollary} \label{dualmrd}
 If a code $\mC$ is MRD, then $\mC^\perp$ is also MRD.
\end{corollary}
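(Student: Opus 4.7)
The plan is to derive a lower bound on $\mbox{minrk}(\mC^\perp)$ via Lemma \ref{tecn}, and then invoke the bound of Theorem \ref{bound1} together with the known dimension of $\mC^\perp$ to conclude optimality. By the transposition remark preceding the statement, I may assume $k \le m$. The degenerate cases are immediate: if $\mC = 0$, then $\mC^\perp = \mbox{Mat}(k\times m, \F_q)$ which is MRD by Remark \ref{tuttomrd}, and if $\mC$ is MRD with $d := \mbox{minrk}(\mC) = 1$, then $\dim_{\F_q}(\mC) = mk$ forces $\mC = \mbox{Mat}(k \times m, \F_q)$ and $\mC^\perp = 0$, which is MRD by Definition \ref{defmrd}.

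So assume $2 \le d \le k$, hence $\dim_{\F_q}(\mC) = m(k-d+1)$ and $\dim_{\F_q}(\mC^\perp) = m(d-1)$. The crucial observation is that, because every nonzero matrix in $\mC$ has rank at least $d$, for any subspace $U \subseteq \F_q^k$ of dimension $s < d$ no nonzero element of $\mC$ can have column space inside $U$; consequently $|\mC \cap \mbox{Mat}_U(k \times m, \F_q)| = 1$. Choosing $s = d-1$ and applying Lemma \ref{tecn}, I get
$$1 \;=\; \frac{|\mC|}{q^{m(k-d+1)}}\;|\mC^\perp \cap \mbox{Mat}_{U^\perp}(k \times m, \F_q)| \;=\; |\mC^\perp \cap \mbox{Mat}_{U^\perp}(k \times m, \F_q)|,$$
since $|\mC| = q^{m(k-d+1)}$.

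The subspaces $U^\perp$ arising this way are precisely the $(k-d+1)$-dimensional subspaces of $\F_q^k$. So every nonzero $N \in \mC^\perp$ satisfies $\mbox{colsp}(N) \not\subseteq V$ for every $(k-d+1)$-dimensional $V \subseteq \F_q^k$, i.e., $\mbox{rk}(N) \ge k-d+2$. Therefore $\mbox{minrk}(\mC^\perp) \ge k-d+2$, and by Theorem \ref{bound1}
$$\dim_{\F_q}(\mC^\perp) \;\le\; m\bigl(k - (k-d+2) + 1\bigr) \;=\; m(d-1).$$
Since $\dim_{\F_q}(\mC^\perp) = m(d-1)$ already, equality holds throughout and $\mC^\perp$ is MRD.

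There is no real obstacle here; the only subtlety is picking the correct subspace dimension ($s = d-1$) so that the MRD condition on $\mC$ kills the left-hand side of the identity in Lemma \ref{tecn} and the exponent on $|\mC|$ exactly matches $q^{m(k-s)}$, yielding the clean conclusion that small-column-space matrices in $\mC^\perp$ are also forced to be zero.
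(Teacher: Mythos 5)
Your proof is correct, but it takes a different route from the paper's. The paper deduces the corollary from the MacWilliams identities of Theorem \ref{theo}: taking $\nu=k-d+1$ there, the terms $A_1,\dots,A_{d-1}$ vanish and the identity collapses to $\sum_{j=1}^{k-d+1} B_j \begin{bmatrix} k-j \\ k-d+1-j \end{bmatrix}=0$; since the Gaussian coefficients are positive, all $B_j$ with $1\le j\le k-d+1$ vanish, giving $\mbox{minrk}(\mC^\perp)\ge k-d+2$. You instead bypass the aggregated identity and work subspace-by-subspace with Lemma \ref{tecn}: for each $(d-1)$-dimensional $U$ the MRD hypothesis forces $|\mC\cap\mbox{Mat}_U(k\times m,\F_q)|=1$, and the exponent $q^{m(k-s)}$ exactly cancels $|\mC|$, so $\mC^\perp$ meets $\mbox{Mat}_{U^\perp}(k\times m,\F_q)$ trivially for every $(k-d+1)$-dimensional $U^\perp$; hence no nonzero element of $\mC^\perp$ has rank $\le k-d+1$. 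Both proofs finish identically by applying Theorem \ref{bound1} to $\mC^\perp$ to squeeze $\mbox{minrk}(\mC^\perp)=k-d+2$. Your argument is slightly more elementary in that it avoids the rank-distribution bookkeeping and the positivity argument for the coefficients, at the cost of not illustrating the paper's point that Theorem \ref{theo} alone suffices; the paper's version has the advantage of reusing the identity it just established and of generalizing immediately to statements about the $B_j$ themselves. All your side conditions check out: the degenerate cases $d=1$ and $\mC=0$ are handled correctly, and for $d\ge 2$ the code $\mC^\perp$ is nonzero, so Theorem \ref{bound1} applies.
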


\begin{proof}
 Let $\mC \subseteq \mbox{Mat}(k\times m, \F_q)$ be MRD. If $\mC =\{0\}$ or 
$\mC=\mbox{Mat}(k\times m, \F_q)$ the result follows from
Definition \ref{defmrd} and Remark \ref{tuttomrd}.
 Hence we assume $0 < \dim_{\F_q}(\mC)<km$. Assume $k \le m$
without loss of generality. 
Denote by $d$ the minimum rank of $\mC$, so 
 that $|\mC|=q^{m(k-d+1)}$.
 Let ${(A_i)}_{i \in \N}$ and ${(B_j)}_{j \in \N}$ be the rank distributions of
 $\mC$ and $\mC^\perp$, respectively. We have $A_0=B_0=1$ and $A_i=0$ for $1 \le
i \le d-1$.  
 Theorem \ref{theo} with $\nu=k-d+1$ gives
 $$\begin{bmatrix} k \\ k-d+1   \end{bmatrix} =  \begin{bmatrix} k \\ k-d+1  
\end{bmatrix} +
 \sum_{j=1}^{k-d+1} B_j \begin{bmatrix} k-j \\ k-d+1-j   \end{bmatrix},
$$
i.e.,
$$\sum_{j=1}^{k-d+1} B_j \begin{bmatrix} k-j \\ k-d+1-j   \end{bmatrix}=0.$$
Since $d \ge 1$, for $1 \le j \le k-d+1$ we have $k-j \ge k-d+1-j \ge 0$, and so 
$\begin{bmatrix} k-j \\
k-d+1-j   \end{bmatrix}>0$. Hence it must be
$B_j=0$ for $1 \le j \le k-d+1$, i.e., $\mbox{minrk}(\mC^\perp) \ge k-d+2$.
On the other hand, Theorem \ref{bound1} gives $\dim_{\F_q}(\mC^\perp) =m(d-1)\le
m(k-\mbox{minrk}(\mC^\perp)+1)$, i.e., 
$\mbox{minrk}(\mC^\perp) \le k-d+2$. It follows $\mbox{minrk}(\mC^\perp)=k-d+2$,
and so $\mC^\perp$ is MRD.
\end{proof}

\begin{remark}
 Corollary \ref{dualmrd} was first proved by Delsarte using the
theory of
designs and codesigns in regular semilattices (\cite{del1}, 
Theorem 5.5).
Theorem \ref{theo} allows us to give a short proof for the same
result.
Notice also that, by Remark \ref{delsbattegab}, Corollary
\ref{dualmrd} generalizes the analogous
result for Gabidulin codes of \cite{gabid}.
\end{remark}

\section{Minimum and maximum rank of a code} \label{sec4}
In this section we investigate the minimum and the maximum rank of a Delsarte code $\mC$,
and show how they relate to the minimum and maximum rank of its dual code $\mC^\perp$.
As an application, we give a recursive formula for the rank distribution of an MRD code.

\begin{proposition} \label{+2}
 Let $\mC \subsetneq \mbox{Mat}(k\times m, \F_q)$ be a non-zero code. We have
 $$\mbox{minrk}(\mC^\perp) \le \min \{ k,m\} - \mbox{minrk}(\mC)+2.$$
 Moreover, the bound is attained if and only if $\mC$ is MRD.
\end{proposition}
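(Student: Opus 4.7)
The plan is to obtain the inequality as an immediate consequence of Theorem \ref{bound1} applied to both $\mC$ and $\mC^\perp$, and then to settle the equality case by combining this chain of inequalities with Corollary \ref{dualmrd}.

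First, by the transposition trick already justified in the paper, I would assume without loss of generality that $k \le m$, so that $\min\{k,m\} = k$. The hypotheses $\mC \neq 0$ and $\mC \subsetneq \mbox{Mat}(k \times m, \F_q)$ ensure that both $\mC$ and $\mC^\perp$ are non-zero (the latter by Lemma \ref{proprduale}), so that both minimum ranks are well-defined. Writing $d := \mbox{minrk}(\mC)$ and $d' := \mbox{minrk}(\mC^\perp)$, Theorem \ref{bound1} applied to $\mC$ gives $\dim_{\F_q}(\mC) \le m(k-d+1)$, and Lemma \ref{proprduale} then yields $\dim_{\F_q}(\mC^\perp) \ge m(d-1)$. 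Applying Theorem \ref{bound1} to $\mC^\perp$ gives $\dim_{\F_q}(\mC^\perp) \le m(k-d'+1)$. Chaining the two estimates produces $m(d-1) \le m(k-d'+1)$, whence $d' \le k-d+2$, which is the desired bound.

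For the equality characterization, the forward direction proceeds as follows: if $\mC$ is MRD then $\dim_{\F_q}(\mC) = m(k-d+1)$, hence $\dim_{\F_q}(\mC^\perp) = m(d-1)$; by Corollary \ref{dualmrd} the code $\mC^\perp$ is also MRD, so $\dim_{\F_q}(\mC^\perp) = m(k-d'+1)$, and comparing the two expressions forces $d' = k-d+2$. For the converse, if $d' = k-d+2$ then both inequalities in the chain above must collapse to equalities, which in particular forces $\dim_{\F_q}(\mC) = m(k-d+1)$, i.e., $\mC$ is MRD.

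The whole argument is a short bookkeeping consequence of Theorem \ref{bound1}, with Corollary \ref{dualmrd} used only in the easy direction of the equality case; I do not foresee any genuine obstacle. The only point requiring minor care is verifying that $\mC^\perp \neq 0$ so that Theorem \ref{bound1} is applicable to it, but this is immediate from $\mC \subsetneq \mbox{Mat}(k \times m, \F_q)$.
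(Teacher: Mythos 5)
Your proposal is correct and follows essentially the same route as the paper: sandwich a dimension between the Singleton-type bounds of Theorem \ref{bound1} applied to $\mC$ and $\mC^\perp$, then use Corollary \ref{dualmrd} for the forward direction of the equality case and the collapse of the inequality chain for the converse. The only cosmetic difference is that you phrase the chain in terms of $\dim_{\F_q}(\mC^\perp)$ where the paper uses $\dim_{\F_q}(\mC)$, which is immaterial.
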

\begin{proof}
 Assume $k \le m$ without loss of generality. Theorem
\ref{bound1} applied to the code $\mC$ gives $\dim_{\F_q}(\mC) \le
m(k-\mbox{minrk}(\mC)+1)$.
The same theorem applied to $\mC^\perp$ gives $\dim_{\F_q}(\mC^\perp) \le m(k-\mbox{minrk}(\mC^\perp)+1)$, i.e., 
 $\dim_{\F_q}(\mC) \ge m(\mbox{minrk}(\mC^\perp)-1)$. Hence we have
 \begin{equation} \label{eq1}
  m(\mbox{minrk}(\mC^\perp)-1) \le \dim_{\F_q}(\mC) \le m(k-\mbox{minrk}(\mC)+1).
 \end{equation}
 In particular, $\mbox{minrk}(\mC^\perp)-1 \le k-\mbox{minrk}(\mC)+1$, and the bound follows. Let us prove the second part of the
 statement. Assume that $\mC$ is MRD, and let $d:=\mbox{minrk}(\mC)$. We
have $\dim_{\F_q}(\mC)=m(k-d+1)$,
 and so $\dim_{\F_q}(\mC^\perp)=m(d-1)$. By Corollary \ref{dualmrd},
$\mC^\perp$ is also MRD, and so
$m(d-1)=m(k-\mbox{minrk}(\mC^\perp)+1)$.
 It follows $\mbox{minrk}(\mC^\perp)=k-d+2$. On the other hand, if 
$\mbox{minrk}(\mC^\perp) = k - \mbox{minrk}(\mC)+2$ then 
 both the inequalities in (\ref{eq1}) are in fact equalities, and so $\mC$ is MRD.
\end{proof}

\begin{corollary} \label{nondipende}
 The rank distribution of a non-zero MRD  code $\mC \subseteq \mbox{Mat}(k\times
m, \F_q)$ only depends on $k$, $m$ and $\mbox{minrk}(\mC)$.
 \end{corollary}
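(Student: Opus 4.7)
The plan is to exploit both \mbox{Theorem \ref{theo}} and the fact, already established in \mbox{Corollary \ref{dualmrd}}, that the dual of an MRD code is MRD. Assume without loss of generality that $k\le m$, and let $d:=\mbox{minrk}(\mC)$. Because $\mC$ is MRD we have $|\mC|=q^{m(k-d+1)}$, $A_0=1$, and $A_i=0$ for $1\le i\le d-1$, all of which depend only on $k,m,d$. By \mbox{Corollary \ref{dualmrd}} the code $\mC^\perp$ is also MRD, and the dimension equality $\dim_{\F_q}\mC^\perp=m(d-1)$ forces $\mbox{minrk}(\mC^\perp)=k-d+2$. Hence $B_0=1$ and $B_j=0$ for $1\le j\le k-d+1$.

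Now I would apply \mbox{Theorem \ref{theo}} with $\nu\in\{0,1,\dots,k-d\}$. For every such $\nu$ we have $1\le j\le \nu$ implies $j\le k-d+1$, so $B_j=0$ and the right-hand side collapses to the single term
$$\sum_{i=0}^{k-\nu}A_i\begin{bmatrix}k-i\\ \nu\end{bmatrix}=\frac{|\mC|}{q^{m\nu}}\begin{bmatrix}k\\ \nu\end{bmatrix}.$$
Setting $\ell:=k-\nu\in\{d,d+1,\dots,k\}$ and using $\begin{bmatrix}k-i\\ \nu\end{bmatrix}=\begin{bmatrix}k-i\\ \ell-i\end{bmatrix}$, this rearranges to
$$A_\ell=q^{m(\ell-d+1)}\begin{bmatrix}k\\ \ell\end{bmatrix}-\sum_{i=0}^{\ell-1}A_i\begin{bmatrix}k-i\\ \ell-i\end{bmatrix}.$$
Since $A_0,\dots,A_{d-1}$ are already known from $k,m,d$, this recursion, run for $\ell=d,d+1,\dots,k$, produces each $A_\ell$ as an explicit expression in $k$, $m$, and $d$ alone; the entries $A_\ell$ for $\ell>k$ vanish trivially. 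This establishes the corollary.

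No real obstacle arises: the identity $\begin{bmatrix}k-i\\ \nu\end{bmatrix}=\begin{bmatrix}k-i\\ \ell-i\end{bmatrix}$ and the vanishing conditions for the $B_j$'s do all the work. The one place to be careful is to check that the reduction of the right-hand side of \mbox{Theorem \ref{theo}} to its $j=0$ summand holds precisely in the range $0\le \nu\le k-d$, which is exactly the range needed to recover all of the $A_\ell$ with $\ell\ge d$; the consistency check at $\nu=k-d+1$ (where $\ell=d-1$) is automatic and introduces no new unknowns.
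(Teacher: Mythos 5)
Your proof is correct and follows essentially the same route as the paper: both deduce that $\mC^\perp$ has minimum rank $k-d+2$ (the paper cites Proposition \ref{+2}, whose proof is exactly your direct argument from Corollary \ref{dualmrd}), then specialize Theorem \ref{theo} to $0\le\nu\le k-d$ so that the dual side collapses to its $j=0$ term. The paper phrases the conclusion as a triangular linear system with unit diagonal having a unique solution, while you write out the equivalent explicit recursion (which is precisely the content of Remark \ref{formrd}); the difference is purely presentational.
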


\begin{proof}
 Assume $k \le m$ without loss of generality. Let
$d:=\mbox{minrk}(\mC)$, and
let ${(A_i)}_{i \in \N}$ denote the rank distribution of $\mC$. 
 By Proposition \ref{+2}, $\mC^\perp$ 
 has minimum rank $k-d+2$. Hence the equations of Theorem \ref{theo} for 
$0 \le \nu \le k-d$ reduce
 to
 $$\begin{bmatrix} k \\ \nu \end{bmatrix} + \sum_{i=d}^{k-\nu} A_i
\begin{bmatrix}  k-i \\ \nu \end{bmatrix}=\frac{|\mC|}{q^{m\nu}} \begin{bmatrix}
k \\ \nu \end{bmatrix}, \ \ \ \ 0 \le \nu \le k-d.$$
 These identities give a linear system of $k-d+1$ equations in the $k-d+1$
unknowns $A_{d},...,A_{k}$. It is easy to see that the matrix associated to 
 the system is triangular with all 1's on the diagonal. In particular, the
solution to the system is unique. Hence $A_{d},...,A_{k}$ are uniquely
 determined by $k$, $m$ and $d$. Since $A_0=1$ and $A_i=0$ for $0 <i<d$ and 
for $i>k$, the thesis follows.
\end{proof}

\begin{remark}
 Corollary \ref{nondipende} was first proved by Delsarte by computing
explicitly the rank distribution of an MRD code, and then observing that
the
obtained formulas only depend on the parameters $m$, $k$, $d$ (see
\cite{del1}, Theorem 5.6). Corollary \ref{nondipende} allows us to give a concise proof
for the same result.
The rank distribution of Delsarte MRD codes
was also computed in \cite{mcg} employing elementary techniques.
\end{remark}

\begin{remark} \label{formrd} Using the same argument as Corollary \ref{formule} it is easy to
derive a
recursive formula for the rank distribution ${(A_i)}_{i \in \N}$ of a non-zero MRD code $\mC
\subseteq
\mbox{Mat}(k\times m, \F_q)$ of given minimum rank $d$:

$$
 \left\{ \begin{array}{l}  A_0= 1, \ \ \ A_d = (q^m-1)
\begin{bmatrix} k \\ k-d \end{bmatrix} , \\
A_{d+\ell} = (q^{m(1+\ell)} -1)
\begin{bmatrix} k \\ k-d-\ell \end{bmatrix}  - 
\mathlarger{\sum}_{i=d}^{d+\ell-1} A_i \begin{bmatrix} k-i
\\ k-d-\ell \end{bmatrix}\ \ \ \
\mbox{ for $1 \le \ell \le k-d$.} \end{array}\right.\ 
$$
We do not go into the details of the proof. 
\end{remark}

The following result is the analogue of Theorem \ref{bound1} for the maximum
rank.

\begin{proposition} \label{anticode}
 Let $\mC \subseteq \mbox{Mat}(k\times m, \F_q)$ be a code. We have
 $$\dim_{\F_q}(\mC) \le \max \{ k,m\} \cdot \mbox{maxrk}(\mC).$$
 Moreover, for any choice of $0 \le D \le \min \{ k,m\}$ there exists a code 
 $\mC \subseteq \mbox{Mat}(k \times m, \F_q)$ with maximum rank
equal to $D$ and attaining the upper bound.
\end{proposition}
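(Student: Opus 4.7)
My plan is to split the statement into two independent parts: the existence of an optimal anticode for each admissible $D$, which I would handle by an explicit construction, and the upper bound, which I would prove via a short dimension-count argument against an auxiliary MRD code furnished by Theorem \ref{bound1}. Throughout, I would begin by reducing to the case $k \le m$ via transposition, which preserves both dimension and maximum rank.

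For existence I would fix a $D$-dimensional subspace $U \subseteq \F_q^k$ and set $\mC := \mbox{Mat}_U(k \times m, \F_q)$. Lemma \ref{dimensioni} immediately gives $\dim_{\F_q}(\mC) = mD = \max\{k,m\} \cdot D$, and every $M \in \mC$ has $\mbox{rk}(M) \le \dim_{\F_q}(U) = D$, with equality attained by any $M$ whose first $D$ columns form a basis of $U$; hence $\mbox{maxrk}(\mC) = D$ and the bound is saturated.

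For the upper bound, set $D := \mbox{maxrk}(\mC)$. The case $D = k$ is trivial, since then the stated bound reads $\dim_{\F_q}(\mC) \le mk$. Otherwise $D + 1 \le k = \min\{k,m\}$, so Theorem \ref{bound1} furnishes an MRD code $\mD \subseteq \mbox{Mat}(k \times m, \F_q)$ with $\mbox{minrk}(\mD) = D + 1$ and $\dim_{\F_q}(\mD) = m(k - D)$. I would then argue by contradiction: if $\dim_{\F_q}(\mC) > mD$, then $\dim_{\F_q}(\mC) + \dim_{\F_q}(\mD) > mk$, forcing $\mC \cap \mD \ne \{0\}$; any nonzero $N$ in this intersection would simultaneously satisfy $\mbox{rk}(N) \le D$ (because $N \in \mC$) and $\mbox{rk}(N) \ge D + 1$ (because $N \in \mD$), which is impossible.

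The main conceptual step---the one I would expect to take the longest to spot---is to recognize the anticode bound as the \emph{complement} of Theorem \ref{bound1}: the maximal dimensions of a rank-bounded-above code and of an MRD code of minimum rank $D + 1$ add up exactly to $\dim_{\F_q}\mbox{Mat}(k \times m, \F_q) = mk$, so neither can strictly exceed its bound without the two overlapping nontrivially, and any such overlap yields a matrix whose rank is forced both to be $\le D$ and $\ge D + 1$. Once this duality is identified, everything else---transposition, the invocation of Theorem \ref{bound1}, and the construction of the attaining example---is essentially bookkeeping.
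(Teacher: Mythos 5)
Your proposal is correct and follows essentially the same route as the paper: the attaining example is the space of matrices with column space in a fixed $D$-dimensional subspace (the paper phrases this as matrices whose last $k-D$ rows vanish), and the upper bound is obtained by intersecting $\mC$ with an MRD code of minimum rank $D+1$ supplied by Theorem \ref{bound1} and counting dimensions. The only difference is cosmetic: you run the dimension count as a contradiction, while the paper notes directly that $\mC \cap \mD = \{0\}$ forces $\dim_{\F_q}(\mC) \le km - \dim_{\F_q}(\mD) = mD$.
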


\begin{proof}
 Assume $k
\le m$ without loss of generality. Fix $0 \le D \le k$. The set of all $k \times
m$ matrices having the last
 $k-D$ rows equal to zero is an example of a code of maximum rank   $D$ and dimension $mD$ over $\F_q$. Now we prove the
 first part of the statement.
 Let $\mC \subseteq \mbox{Mat}(k \times m, \F_q)$ be a code with $\mbox{maxrk}(\mC)=D$. If $D=k$ then the bound is trivial. Hence we assume 
 $D \le k-1$. Theorem \ref{bound1} gives a code $\mD \subseteq
\mbox{Mat}(k\times m, \F_q)$ with $\mbox{minrk}(\mD) = D+1$ and 
 $\dim_{\F_q}(\mD)=m(k-D)$. We clearly have $\mC \cap \mD =\{0\}$ and $\mC \oplus
\mD \subseteq \mbox{Mat}(k\times m, \F_q)$. Hence $\dim_{\F_q}(\mC) \le
km-\dim_{\F_q}(\mD)=mD$.
\end{proof}

\begin{definition} \label{defanticode}
 A code $\mC \subseteq \mbox{Mat}(k\times m, \F_q)$ which attains the upper bound of 
Proposition \ref{anticode}
 is said to be a (\textbf{Delsarte}) \textbf{optimal anticode}.
\end{definition}

We conclude the section with a result that relates the minimum rank of a code
with the maximum rank of the dual code.

\begin{proposition}
 Let $\mC \subseteq \mbox{Mat}(k\times m, \F_q)$ be a non-zero code. We have
 $$\mbox{minrk}(\mC) \le \mbox{maxrk}(\mC^\perp)+1.$$
\end{proposition}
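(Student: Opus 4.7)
The plan is to combine the two dimension bounds available: Theorem \ref{bound1} applied to $\mC$ itself, and Proposition \ref{anticode} applied to $\mC^\perp$, and then exploit the basic dimension relation $\dim_{\F_q}(\mC) + \dim_{\F_q}(\mC^\perp) = km$ from Lemma \ref{proprduale}. Without loss of generality I will assume $k \le m$ (transposing the code does not affect minimum or maximum rank, as noted in the earlier remark).

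First I would set $d := \mbox{minrk}(\mC)$ and $D := \mbox{maxrk}(\mC^\perp)$. Applying Theorem \ref{bound1} to the non-zero code $\mC$ gives
$$\dim_{\F_q}(\mC) \le m(k - d + 1).$$
Next, applying Proposition \ref{anticode} to $\mC^\perp$ gives
$$\dim_{\F_q}(\mC^\perp) \le m \cdot D.$$
Here I should be slightly careful: Proposition \ref{anticode} is stated for arbitrary codes, so it applies whether or not $\mC^\perp$ is zero.

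Adding the two inequalities and using $\dim_{\F_q}(\mC) + \dim_{\F_q}(\mC^\perp) = km$ yields
$$km \le m(k - d + 1) + mD = m(k - d + 1 + D).$$
Dividing through by $m$ and rearranging gives $d \le D + 1$, which is the claimed inequality. I do not foresee any real obstacle: the only subtlety is handling the case distinction $k \le m$ versus $m \le k$, which is harmless by the transposition remark, and making sure Proposition \ref{anticode} is legitimately applicable to $\mC^\perp$ even when the latter is zero (in which case $D = 0$ and the bound just says $\dim_{\F_q}(\mC^\perp) \le 0$, which is correct).
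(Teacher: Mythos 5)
Your proposal is correct and is essentially the paper's own argument: both apply Theorem \ref{bound1} to $\mC$, Proposition \ref{anticode} to $\mC^\perp$, and combine them via $\dim_{\F_q}(\mC)+\dim_{\F_q}(\mC^\perp)=km$ (the paper chains the two inequalities around $\dim_{\F_q}(\mC)$ rather than adding them, but that is only a cosmetic difference). No gaps.
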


\begin{proof}
 Assume $k \le m$ without loss of generality.
Applying Theorem
\ref{bound1} to $\mC$ we obtain 
 $\dim_{\F_q}(\mC) \le m(k-\mbox{minrk}(\mC)+1)$, while Proposition \ref{anticode} applied to $\mC^\perp$ gives
 $\dim_{\F_q}(\mC^\perp) \le m \cdot \mbox{maxrk}(\mC^\perp)$, i.e., 
$\dim_{\F_q}(\mC) \ge m(k-\mbox{maxrk}(\mC^\perp))$.
 Hence we have
 $$m(k-\mbox{maxrk}(\mC^\perp)) \le \dim_{\F_q}(\mC) \le m(k-\mbox{minrk}(\mC)+1),$$
 and the thesis follows.
\end{proof}

\section{Optimal anticodes} \label{sec4a}

In this section we provide a new characterization of  optimal
anticodes in terms of their
intersection with MRD codes. As an application of such a description, we prove
that the dual of an
optimal anticode is an optimal anticode.

Let us first briefly recall some notions which we will need in
the sequel. See \cite{niede}, Section 3.4 for details.

\begin{definition}
 Let $\F_{q^m}/\F_q$ be a finite field extension. A
\textbf{linearized polynomial}
$p$ over $\F_{q^m}$ is a polynomial of the form
$$p(x)=\alpha_0x + \alpha_1 x^q+ \alpha_2 x^{q^2}+ \cdots +\alpha_{s}x^{q^s},
\ \ \ \ \ \alpha_i \in \F_{q^m}, \ \ i=0,...,s.$$
The \textbf{degree} of $p$, denoted by $\deg(p)$,  is the largest $i \ge 0$
such that $\alpha_i \neq 0$.
\end{definition}

\begin{remark}
 It is
well known (\cite{niede}, Theorem 3.50) that the roots
of a linearized polynomial $p$ over $\F_{q^m}$ form an $\F_q$-vector subspace of
$\F_{q^m}$,
which we denote by
$V(p) \subseteq \F_{q^m}$. Notice that for any linearized polynomial $p$ we have
$\dim_{\F_q}V(p) \le \deg(p)$.
\end{remark}

\begin{lemma} \label{prel}
 Let $\mC \subseteq \mbox{Mat}(k\times m, \F_q)$ be a non-zero MRD code with
minimum rank $d$, and let ${(A_i)}_{i \in \N}$ be the rank distribution of $\mC$. Then
$A_{d+\ell}>0$ for all $0 \le \ell \le \min \{k,m\}-d$.
\end{lemma}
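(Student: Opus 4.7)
The plan is to exploit Corollary \ref{nondipende}: since the rank distribution of any non-zero MRD code depends only on $k$, $m$, and $d$, it suffices to exhibit, for each rank value in the target range, one example of an MRD code with parameters $(k,m,d)$ that realizes that rank. After a harmless transposition (which preserves the rank distribution) I may and will assume $k \le m$, so the target range becomes $d \le d+\ell \le k$.

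For the example I use a standard Gabidulin code. Fix $\F_q$-linearly independent elements $g_1,\ldots,g_k \in \F_{q^m}$ (possible since $k \le m$), let $W := \mbox{Span} \{g_1,\ldots,g_k\} \subseteq \F_{q^m}$, and let $C \subseteq \F_{q^m}^k$ consist of all evaluation vectors $(p(g_1),\ldots,p(g_k))$ as $p$ ranges over linearized polynomials over $\F_{q^m}$ of $q$-degree at most $k-d$. This is the classical Gabidulin construction, yielding an MRD Gabidulin code of $\F_{q^m}$-dimension $k-d+1$ and minimum rank $d$. By Proposition \ref{gabtodel} the associated Delsarte code $\mC_{\mG}(C) \subseteq \mbox{Mat}(k \times m, \F_q)$ is an MRD code with parameters $(k,m,d)$ having the same rank distribution as $C$.

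Given any $r$ with $d \le r \le k$, I produce a codeword of rank exactly $r$ as follows. Choose an $\F_q$-subspace $T \subseteq W$ of dimension $k-r$ and set
\[
p(x) := \prod_{t \in T}(x-t).
\]
Standard facts about linearized polynomials (see \cite{niede}) ensure that $p$ is a linearized polynomial of $q$-degree exactly $k-r \le k-d$ with root-space $V(p) = T$; in particular $p$ corresponds to a codeword of $C$. Since $p$ acts as an $\F_q$-linear endomorphism of $\F_{q^m}$, applying rank-nullity to $p|_W \colon W \to \F_{q^m}$ gives
\[
\mbox{rk}(p(g_1),\ldots,p(g_k)) = \dim_{\F_q} p(W) = \dim_{\F_q} W - \dim_{\F_q}(V(p) \cap W) = k - (k-r) = r.
\]
Hence $A_r(C) > 0$, and Proposition \ref{gabtodel} together with Corollary \ref{nondipende} transfers this to $A_r(\mC) > 0$.

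The only non-trivial input, which I do not re-derive, is the classical material on linearized polynomials: that $\prod_{t \in T}(x-t)$ is a linearized polynomial with root-space exactly $T$, and that the Gabidulin evaluation code just described is MRD with the stated parameters. Once these standard facts are admitted the entire argument is a single rank-nullity count, so no genuine obstacle remains.
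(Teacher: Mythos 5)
Your proof is correct and follows essentially the same route as the paper: reduce via Corollary \ref{nondipende} to a single Gabidulin-type MRD code, and realize each rank $r$ with $d \le r \le k$ by evaluating the subspace polynomial $\prod_{\beta \in T}(x-\beta)$ of a $(k-r)$-dimensional subspace $T$ of the span of the evaluation points. The only (cosmetic) difference is that you finish with a rank--nullity count for $p|_W$, whereas the paper verifies directly that $p(\gamma_{t+1}),\dots,p(\gamma_k)$ are $\F_q$-linearly independent.
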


\begin{proof}
 Assume $k\le m$ without loss of generality. By
Corollary \ref{nondipende}, we shall prove the lemma for a given  MRD code $\mC
\subseteq \mbox{Mat}(k\times m, \F_q)$ of our choice with minimum rank
$d$. We first construct a convenient MRD code with the prescribed parameters, and
we essentially follow the construction of \cite{gabid}.

Let
$\gamma_1,...,\gamma_k \in \F_{q^m}$ be linearly independent over $\F_q$. Denote
by $\mathcal{L}(\F_{q^m},k-d)$ the $\F_{q^m}$-vector space of linearized polynomials over
$\F_{q^m}$ of degree less than or equal to $k-d$. We have
$\dim_{\F_{q^m}} \mathcal{L}(\F_{q^m},k-d)=k-d+1$.
Let $ev:\mathcal{L}(\F_{q^m},k-d) \to \F_{q^m}^k$ be the evaluation map
defined by $ev(p):=(p(\gamma_1),...,p(\gamma_k))$ for any $p \in 
\mathcal{L}(\F_{q^m},k-d)$. Then the image of $ev$ is a Gabidulin code 
$C \subseteq \F_{q^m}^k$ with minimum rank $d$ and dimension $k-d+1$ over
$\F_{q^m}$ (\cite{KK1}, Theorem 14). Let $\mG$ be any basis of
$\F_{q^m}$ over $\F_q$. By Proposition \ref{gabtodel},
$\mC:=\mC_{\mG}(C) \subseteq \mbox{Mat}(k\times m,\F_q)$ is a Delsarte rank-metric
code with $\dim_{\F_q}(\mC)=m(k-d+1)$ and the same rank distribution as $C$. 
In particular, $\mC$ is a non-zero MRD code
with minimum rank $d$.

Now we prove the lemma for the MRD code $\mC$ that we constructed. Fix 
$0 \le \ell \le k-d$. Define $t:=k-d-\ell$, and let $U
\subseteq \F_{q^m}$ be the $\F_q$-subspace generated by $\{
\gamma_1,...,\gamma_{t}\}$. If $t=0$ we set $U$ to be the
zero space.
By \cite{niede}, Theorem 3.52, 
$$p_U:= \prod_{\beta \in U} (x-\beta) \in \F_{q^m}[x]$$
is a linearized polynomial over $\F_{q^m}$ of degree $t=k-d-\ell \le k-d$.
Hence $p_U \in  \mathcal{L}(\F_{q^m},k-d)$.  By Proposition \ref{gabtodel} it
suffices to prove that $ev(p_U)=(p_U(\gamma_1),...,p_U(\gamma_k))$ has rank
$d+\ell=k-t$. Clearly, $V(p_U)=U$. In particular we have
$ev(p_U)=(0,...,0,p_U(\gamma_{t+1}),...,p_U(\gamma_k))$. We will prove that 
$p_U(\gamma_{t+1}),..., p_U(\gamma_k)$ are linearly independent over $\F_q$.
Assume that there exist $a_{t+1},...,a_k \in \F_q$ such 
that $\sum_{i=t+1}^k a_i p_U(\gamma_i)=0$. Then we have $p_U \left(
\sum_{i=t+1}^k
a_i\gamma_i \right)=0$, i.e.,  $\sum_{i=t+1}^k
a_i\gamma_i \in V(p_U)=U$.
It follows that there exist $a_1,...,a_t \in \F_q$ such that
$\sum_{i=1}^t a_i \gamma_i = \sum_{i=t+1}^k a_i\gamma_i$, i.e., $\sum_{i=1}^t
a_i \gamma_i - \sum_{i=t+1}^k a_i\gamma_i=0$. Since $\gamma_1,...,\gamma_k$ are
linearly independent over $\F_q$, we have $a_i=0$ for all $i=1,...,k$. In
particular $a_i=0$ for $i=t+1,...,k$. Hence
$p_U(\gamma_{t+1}),...,p_U(\gamma_k)$ are linearly independent over $\F_q$, as claimed.
\end{proof}

In the following result we give a necessary and sufficient condition for a Delsarte code
$\mC \subseteq
\mbox{Mat}(k\times m,\F_q)$ with $\dim_{\F_q}(\mC) \equiv 0 \mod m$ to be an optimal anticode.

\begin{proposition} \label{criterio}
 Let $0 \le D \le \min \{ k,m\}-1$ be an integer, and let $\mC \subseteq
\mbox{Mat}(k\times m,\F_q)$ be an $\F_q$-subspace with $\dim_{\F_q}(\mC) =\max
\{k,m \}
\cdot D$. The following facts are equivalent.

\begin{enumerate}
 \item $\mC$ is an optimal anticode.
 \item $\mC \cap \mD =\{0\}$ for all non-zero MRD codes $\mD \subseteq
\mbox{Mat}(k\times m,\F_q)$ with $\mbox{minrk}(\mD)=D+1$.
\end{enumerate}
\end{proposition}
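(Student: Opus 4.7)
The plan is to prove the two implications separately, treating the forward direction as a one-line consequence of the definitions and the reverse direction by contradiction, using Lemma \ref{prel} together with the invariance of the MRD property under left/right multiplication by invertible matrices.

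For $(1)\Rightarrow(2)$: if $\mC$ is an optimal anticode then the equality $\dim_{\F_q}(\mC)=\max\{k,m\}\cdot \mbox{maxrk}(\mC)$ combined with the dimension hypothesis forces $\mbox{maxrk}(\mC)=D$. For any non-zero MRD code $\mD$ with $\mbox{minrk}(\mD)=D+1$, every non-zero element of $\mD$ has rank at least $D+1>D$ and therefore cannot lie in $\mC$, whence $\mC\cap\mD=\{0\}$.

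For $(2)\Rightarrow(1)$: I would argue by contradiction. Observe first that, by Proposition \ref{anticode} and the dimension hypothesis, $\mC$ fails to be an optimal anticode exactly when $\mbox{maxrk}(\mC)\ge D+1$ (the case $D=0$ is trivial, since $\dim_{\F_q}(\mC)=0$ already forces $\mC=\{0\}$). Assume therefore that there exists $M\in\mC$ with $r:=\mbox{rk}(M)\ge D+1$; the plan is to exhibit an MRD code $\mD$ of minimum rank $D+1$ containing $M$, which then produces a non-zero element of $\mC\cap\mD$ and contradicts the hypothesis. To build such a $\mD$: since $D+1\le\min\{k,m\}$, Theorem \ref{bound1} supplies some MRD code $\mD_0\subseteq\mbox{Mat}(k\times m,\F_q)$ with $\mbox{minrk}(\mD_0)=D+1$; Lemma \ref{prel} applied to $\mD_0$ (noting $D+1\le r\le\min\{k,m\}$) yields a matrix $N\in\mD_0$ of rank exactly $r$. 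Because $M$ and $N$ have the same rank, there exist invertible matrices $L\in\mbox{GL}(k,\F_q)$ and $R\in\mbox{GL}(m,\F_q)$ with $M=LNR$. Setting $\mD:=L\mD_0 R$ gives the required code: left/right multiplication by invertible matrices is a dimension-preserving $\F_q$-linear isomorphism of $\mbox{Mat}(k\times m,\F_q)$ that preserves the rank of every matrix, so $\mD$ has the same dimension and rank distribution as $\mD_0$, hence is MRD with minimum rank $D+1$, and by construction $M=LNR\in\mD$.

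The main obstacle is the construction of the ``translated'' MRD code $\mD$ containing the prescribed matrix $M$; once one recognises the invariance of the MRD property under the natural $\mbox{GL}(k,\F_q)\times\mbox{GL}(m,\F_q)$ action, Lemma \ref{prel} -- which guarantees that an MRD code realises every rank in its admissible range -- does the rest. No numerical computation is required beyond the dimension count used to invoke Proposition \ref{anticode}.
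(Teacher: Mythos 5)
Your proof is correct and follows essentially the same route as the paper: the forward implication is the same rank-comparison observation, and the reverse implication is the same contradiction argument, producing a matrix of too-large rank in $\mC$, invoking Theorem \ref{bound1} and Lemma \ref{prel} to find a matrix of the same rank in some MRD code of minimum rank $D+1$, and then transporting that code by the $\mbox{GL}(k,\F_q)\times\mbox{GL}(m,\F_q)$ action so that it contains the offending matrix. The only cosmetic difference is that the paper takes the matrix of rank exactly $\mbox{maxrk}(\mC)$ while you allow any rank $r\ge D+1$; both are covered by Lemma \ref{prel}.
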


\begin{proof}
 If $\mC$ is an optimal anticode, then by Definition \ref{defanticode} we have
$D=\mbox{maxrk}(\mC)$. Hence if $\mD$ is any non-zero code with
$\mbox{minrk}(\mD)=D+1$ we clearly have $\mC \cap \mD = \{0\}$. So $(1)
\Rightarrow (2)$ is trivial. Let us prove $(2) \Rightarrow (1)$. By
contradiction, assume that $\mC$ is not an optimal anticode. Since
$\mbox{maxrk}(\mC) \ge D$ (see Proposition \ref{anticode}), we must have
$s:=\mbox{maxrk}(\mC) \ge D+1$. Let $N \in \mC$ with
$\mbox{rk}(N) =s$. Let $\mD'$ be a non-zero MRD code with
$\mbox{minrk}(\mD')=D+1$ (see Theorem \ref{bound1} for the existence of such a code). 
By Lemma \ref{prel} there exists $A \in \mD'$ with
$\mbox{rk}(A)=s$. There exist invertible matrices $P$ and $Q$ of size $k \times
k$ and $m\times m$ (respectively) such that $N=PAQ$. Define $\mD:=P\mD'Q:= \{
PMQ : M \in \mD' \}.$ Then $\mD \subseteq \mbox{Mat}(k\times m, \F_q)$ is a
non-zero MRD
code with $\mbox{minrk}(\mD)=D+1$ and such that $N \in \mC \cap \mD$. Since
$\mbox{rk}(N)=s \ge D+1 \ge 1$, $N$ cannot be the zero matrix. This contradicts
the hypothesis.
\end{proof}

The following result may be regarded as the analogue of Corollary
\ref{dualmrd} for anticodes in the rank metric.

\begin{theorem} \label{dualanticode}
 If $\mC$ is an optimal anticode, then $\mC^\perp$ is also an optimal anticode.
\end{theorem}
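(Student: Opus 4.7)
The plan is to reduce the statement to the characterization in Proposition \ref{criterio} and then exploit the duality between the two conditions there, using Corollary \ref{dualmrd} to swap MRD codes with their duals and Lemma \ref{proprduale} to swap sums with intersections. Throughout I will assume $k \le m$ without loss of generality (by the transposition trick used elsewhere in the paper).

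First I would dispose of the trivial cases. If $\mbox{maxrk}(\mC) = 0$, then $\mC = \{0\}$ and $\mC^\perp = \mbox{Mat}(k\times m, \F_q)$, which is an optimal anticode by Remark \ref{tuttomrd} (viewed as an anticode of maximum rank $k$). If $\mbox{maxrk}(\mC) = k$, then the optimality forces $\dim_{\F_q}(\mC) = km$, so $\mC^\perp = \{0\}$, again trivially optimal. So henceforth I would fix $1 \le D \le k-1$ with $\dim_{\F_q}(\mC) = mD$, giving $\dim_{\F_q}(\mC^\perp) = m(k-D)$.

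Now I would set $D' := k - D$, noting $1 \le D' \le k-1$, so that Proposition \ref{criterio} is applicable to $\mC^\perp$ with the parameter $D'$ (the dimension condition $\dim_{\F_q}(\mC^\perp) = m D'$ is satisfied). So it suffices to show that $\mC^\perp \cap \mD = \{0\}$ for every non-zero MRD code $\mD \subseteq \mbox{Mat}(k\times m,\F_q)$ with $\mbox{minrk}(\mD) = D' + 1 = k - D + 1$. Given such a $\mD$, by Theorem \ref{bound1} we have $\dim_{\F_q}(\mD) = mD$, hence $\dim_{\F_q}(\mD^\perp) = m(k-D)$. By Corollary \ref{dualmrd} the code $\mD^\perp$ is MRD, and its dimension forces $\mbox{minrk}(\mD^\perp) = D + 1$. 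The hypothesis on $\mC$, combined with Proposition \ref{criterio} applied to $\mC$, then gives $\mC \cap \mD^\perp = \{0\}$.

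The final step is to translate this back via duality. Taking duals and using Lemma \ref{proprduale},
\[
\mC^\perp + \mD = (\mC \cap \mD^\perp)^\perp = \{0\}^\perp = \mbox{Mat}(k\times m,\F_q),
\]
so $\dim_{\F_q}(\mC^\perp + \mD) = km$. On the other hand,
\[
\dim_{\F_q}(\mC^\perp + \mD) = \dim_{\F_q}(\mC^\perp) + \dim_{\F_q}(\mD) - \dim_{\F_q}(\mC^\perp \cap \mD) = km - \dim_{\F_q}(\mC^\perp \cap \mD),
\]
forcing $\mC^\perp \cap \mD = \{0\}$. This holds for every such $\mD$, so Proposition \ref{criterio} yields that $\mC^\perp$ is an optimal anticode. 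The only potential subtlety is checking that the parameters stay in the range where Proposition \ref{criterio} is valid, which is why the boundary cases $D = 0$ and $D = k$ must be handled separately at the outset; this is really the only obstacle, as the rest is a clean dualization.
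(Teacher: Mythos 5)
Your proof is correct and follows essentially the same route as the paper: reduce to Proposition \ref{criterio}, dualize the MRD code $\mD$ (the paper invokes Proposition \ref{+2} where you use Corollary \ref{dualmrd} plus a dimension count, which amounts to the same thing), and convert $\mC \cap \mD^\perp = \{0\}$ into $\mC^\perp \cap \mD = \{0\}$ via Lemma \ref{proprduale}. Your separate treatment of the case $D=0$ is in fact slightly more careful than the paper, which folds it into the main argument even though the parameter $k-D=k$ falls outside the stated range of Proposition \ref{criterio} there.
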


\begin{proof}
 Let $\mC \subseteq \mbox{Mat}(k\times m, \F_q)$ be an optimal anticode with
$D:=\mbox{maxrk}(\mC)$. Assume $k \le m$ without
loss of generality. If $D=k$ then the result is trivial. Hence from now on
we assume $0
\le D \le k-1$. By Definition \ref{defanticode} we have $\dim_{\F_q}(\mC) = mD$,
and so $\dim_{\F_q}(\mC^\perp)=m(k-D)$.  By Proposition \ref{criterio} it
suffices to prove that $\mC^\perp \cap \mD = \{0\}$ for all non-zero MRD codes
$\mD \subseteq \mbox{Mat}(k\times m, \F_q)$ with $\mbox{minrk}(\mD)=k-D+1$. If
$\mD$ is such an MRD code, then we have
$\dim_{\F_q}(\mD)=m(k-(k-D+1)+1)=mD<mk$. Hence, by Proposition \ref{+2},
$\mD^\perp$ is an MRD code with $\mbox{minrk}(\mD^\perp)=k-(k-D+1)+2=D+1$.
Proposition \ref{criterio} gives $\mC \cap \mD^\perp = \{0\}$. Since
$\dim_{\F_q}(\mC)+\dim_{\F_q}(\mD^\perp)=mD+m(k-(D+1)+1)=mk$, it follows $\mC
\oplus \mD^\perp = \mbox{Mat}(k\times m, \F_q)$. Hence by Lemma \ref{proprduale}
we have $\mC^\perp \cap \mD =\{0\}$, as claimed.
\end{proof}

The following result shows how the maximum rank of a code $\mC$ and the
maximum rank of the dual code $\mC^\perp$ relate to each other.

\begin{proposition}
 Let $\mC \subseteq \mbox{Mat}(k\times m, \F_q)$ be a code. We have
 $$\mbox{maxrk}(\mC) \ge \min \{ k,m\}-\mbox{maxrk}(\mC^\perp).$$
 Moreover, the bound is attained if and only if $\mC$ is an
optimal anticode.
\end{proposition}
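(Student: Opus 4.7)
The plan is to deduce the inequality from Proposition \ref{anticode} applied to both $\mC$ and $\mC^\perp$, using only the dimension formula for duals. Without loss of generality assume $k \le m$, so $\min\{k,m\}=k$ and $\max\{k,m\}=m$. Proposition \ref{anticode} gives
\[
\dim_{\F_q}(\mC) \le m \cdot \mbox{maxrk}(\mC) \quad \text{and} \quad \dim_{\F_q}(\mC^\perp) \le m \cdot \mbox{maxrk}(\mC^\perp).
\]
Adding these two inequalities and using $\dim_{\F_q}(\mC) + \dim_{\F_q}(\mC^\perp) = km$ from Lemma \ref{proprduale}, I would obtain
\[
km \le m \bigl( \mbox{maxrk}(\mC) + \mbox{maxrk}(\mC^\perp)\bigr),
\]
which upon dividing by $m$ yields exactly the claimed bound $\mbox{maxrk}(\mC) \ge k - \mbox{maxrk}(\mC^\perp)$.

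For the equality characterization, note that the chain of inequalities above is an equality if and only if both of the Proposition \ref{anticode} bounds are equalities simultaneously, that is, if and only if both $\mC$ and $\mC^\perp$ are optimal anticodes in the sense of Definition \ref{defanticode}. Here Theorem \ref{dualanticode} does the job: if $\mC$ is an optimal anticode, then so is $\mC^\perp$, and both upper bounds are tight, forcing equality in the proposition. Conversely, if equality holds in $\mbox{maxrk}(\mC)+\mbox{maxrk}(\mC^\perp)=\min\{k,m\}$, then neither of the two summed inequalities can be strict, so in particular $\dim_{\F_q}(\mC) = \max\{k,m\}\cdot\mbox{maxrk}(\mC)$, meaning $\mC$ is an optimal anticode.

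There is no real obstacle here; the statement is essentially a dimension-counting corollary of Proposition \ref{anticode} combined with Theorem \ref{dualanticode}. The only thing to watch is the WLOG reduction to $k \le m$, which is legitimate by the transpose trick observed in the remark preceding Corollary \ref{dualmrd} (transposition preserves ranks, dimensions, and duals), so both sides of the inequality and the optimal anticode property are invariant under it.
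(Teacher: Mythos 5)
Your proof is correct and follows essentially the same route as the paper: both arguments apply Proposition \ref{anticode} to $\mC$ and $\mC^\perp$, combine the two bounds via the dimension formula $\dim_{\F_q}(\mC)+\dim_{\F_q}(\mC^\perp)=km$, and invoke Theorem \ref{dualanticode} to reduce the equality case to $\mC$ alone being an optimal anticode. The only cosmetic difference is that you add the two inequalities while the paper chains them into a sandwich $m(k-\mbox{maxrk}(\mC^\perp)) \le \dim_{\F_q}(\mC) \le m\cdot\mbox{maxrk}(\mC)$, which is the same computation.
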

\begin{proof}
 Assume $k \le m$ without loss of generality. 
 Proposition \ref{anticode} applied to $\mC^\perp$ gives $\dim_{\F_q}(\mC^\perp) \le m \cdot \mbox{maxrk}(\mC^\perp)$, i.e., 
$\dim_{\F_q}(\mC) \ge m(k-\mbox{maxrk}(\mC^\perp))$. The same proposition applied to $\mC$ gives
$\dim_{\F_q}(\mC) \le m \cdot \mbox{maxrk}(\mC)$. Hence we have
\begin{equation}\label{ineq}
 m(k-\mbox{maxrk}(\mC^\perp)) \le \dim_{\F_q}(\mC) \le m \cdot \mbox{maxrk}(\mC).
\end{equation}
In particular, $k-\mbox{maxrk}(\mC^\perp) \le  \mbox{maxrk}(\mC)$. Given
the inequalities in (\ref{ineq}), it is easy to see that the bound is attained
if and only if both $\mC$ and $\mC^\perp$ are optimal anticodes, which occurs precisely when $\mC$ is an optimal anticode by Theorem \ref{dualanticode}.
 \end{proof}

\section{Matrices with given rank and $h$-trace} \label{sec5}

In this section we apply Corollary \ref{formule}, i.e., the MacWilliams identities for Delsarte codes,
to  classical problems in
enumerative combinatorics, deriving a recursive formula for
the number of $k\times m$ matrices over $\F_q$ with prescribed rank
and $h$-trace.

\begin{definition}
 Let $M \in \mbox{Mat}(k\times m, \F_q)$, and let $1 \le h \le \min \{ k,m\}$
be an integer. The \textbf{$h$-trace} of $M$ is defined by
$$\mbox{Tr}_h(M):= \sum_{i=1}^h M_{ii}.$$
\end{definition}

\begin{remark}
 Since for any matrix $M$ we have $\mbox{Tr}_h(M)=\mbox{Tr}_h(M^t)$, without
loss of generality in the following we only treat the case $k \le m$. Notice also
that when $k=m$ we have $\mbox{Tr}_k(M)=\mbox{Tr}(M)$. Hence the $h$-trace generalizes the trace of a matrix.
\end{remark}

\begin{notation} \label{not}
 Given integers $1 \le k \le m$, $0 \le r \le k$ and $1 \le h \le
k$, we denote by $n_q(k\times m, r,h)$ the number of matrices $M \in
\mbox{Mat}(k\times m, \F_q)$ such that $\mbox{rk}(M)=r$ and $\mbox{Tr}_h(M)=0$.
We also denote by $n_q(k\times m, r,0)$ the number of matrices in
$\mbox{Mat}(k\times m, \F_q)$ of rank $r$.
\end{notation}

\begin{lemma}\label{conto}
 Let $1 \le k \le m$ and $0 \le r \le k$ be integers. We have
 $$n_q(k\times m, r,0)=\begin{bmatrix} m \\ r \end{bmatrix}
\cdot \prod_{i=0}^{r-1} (q^k-q^i).$$ 
\end{lemma}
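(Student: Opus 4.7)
The plan is to count the rank-$r$ matrices in $\mbox{Mat}(k\times m, \F_q)$ by first fixing the row space. A $k\times m$ matrix of rank $r$ (with $r\le k\le m$) has row space equal to some $r$-dimensional $\F_q$-subspace $W\subseteq \F_q^m$, and conversely any such subspace arises this way. Since it is recalled right after Definition \ref{gaussian} that the $q$-binomial $\begin{bmatrix} m \\ r\end{bmatrix}$ counts the number of $r$-dimensional $\F_q$-subspaces of $\F_q^m$, it therefore suffices to show that, for every fixed $r$-dimensional subspace $W\subseteq\F_q^m$, the number of matrices $M\in\mbox{Mat}(k\times m, \F_q)$ with row space exactly $W$ equals $\prod_{i=0}^{r-1}(q^k-q^i)$.

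For the second count, I would identify a matrix $M$ whose rows lie in $W$ with the $\F_q$-linear map $\varphi_M\colon \F_q^k\to W$ that sends the $i$-th standard basis vector of $\F_q^k$ to the $i$-th row of $M$. Under this identification, requiring the row space of $M$ to equal $W$ is exactly the condition that $\varphi_M$ be surjective. Choosing an $\F_q$-isomorphism $W\cong\F_q^r$ reduces the problem to counting surjective $\F_q$-linear maps $\F_q^k\to\F_q^r$.

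Finally, I would observe that the number of surjective linear maps $\F_q^k\to\F_q^r$ coincides (by transposition/duality) with the number of injective linear maps $\F_q^r\to\F_q^k$. An injective linear map $\F_q^r\to\F_q^k$ is determined by the ordered $r$-tuple of images of the standard basis of $\F_q^r$, which must be $\F_q$-linearly independent in $\F_q^k$. Such tuples can be built greedily: the first vector is any of the $q^k-1$ nonzero vectors of $\F_q^k$, the second is any of the $q^k-q$ vectors outside the span of the first, and in general the $(i+1)$-th can be chosen in $q^k-q^i$ ways, for $i=0,\ldots,r-1$. Multiplying gives $\prod_{i=0}^{r-1}(q^k-q^i)$, and combining with the $\begin{bmatrix} m \\ r\end{bmatrix}$ choices for $W$ yields the claimed formula.

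There is no real obstacle here: the argument is a standard double-count, and the only minor point to note is the assumption $k\le m$ (which the author justifies via transposition in the Remark right after Notation \ref{not}), ensuring that $r\le k$ is the only constraint on admissible ranks so that the row-space viewpoint is unambiguous.
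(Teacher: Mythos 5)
Your proof is correct and follows essentially the same route as the paper's own (sketched) argument: both count rank-$r$ matrices by first choosing the row space, of which there are $\begin{bmatrix} m \\ r\end{bmatrix}$, and then counting the matrices with that prescribed row space, obtaining $\prod_{i=0}^{r-1}(q^k-q^i)$. You merely spell out in more detail why this second count holds (via surjective maps onto $W$ and transposition to injective maps), where the paper simply identifies it with the number of full-rank $r\times k$ matrices.
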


\begin{proof}[Sketch of proof]
For a given vector subspace $U \subseteq \F_q^m$ with $\dim_{\F_q}(U)=r$, the
number of matrices $M \in \mbox{Mat}(k\times m, \F_q)$ whose row space equals
$U$ is precisely the number of full-rank $r \times k$ matrices,
which is $\prod_{i=0}^{r-1} (q^k-q^i)$. The thesis follows from the fact that
the number of subspaces  $U \subseteq \F_q^m$ with $\dim_{\F_q}(U)=r$ is
$\begin{bmatrix} m \\ r \end{bmatrix}$.
\end{proof}

\begin{remark}
 We notice that if one has the number of matrices in $\mbox{Mat}(k\times m,
\F_q)$ of rank $r$ and zero $h$-trace, then he can also determine the number of matrices
in $\mbox{Mat}(k\times m, \F_q)$ of rank $r$ and $h$-trace equal to $\alpha$,
for any $\alpha$ in $\F_q$. Since the number of $k\times m$ matrices over $\F_q$
of rank $r$ is given by Lemma \ref{conto}, this fact is trivial
when $q=2$. On the other hand, if $q>2$ and $\alpha \neq \beta$ are non-zero
elements of $\F_q$, then the map $\mbox{Mat}(k\times m,
\F_q) \to \mbox{Mat}(k\times m, \F_q)$ defined by $M \mapsto \alpha^{-1}\beta M$
gives a bijection between the rank $r$ matrices with $h$-trace equal to $\alpha$
and the rank $r$ matrices with $h$-trace equal to
$\beta$. It follows that for any $\alpha \in \F_q \setminus \{ 0 \}$ the
number of matrices in
$\mbox{Mat}(k\times m, \F_q)$ with rank $r$ and $h$-trace equal to $\alpha$ is
$$\frac{n_q(k\times m, r,0)-n_q(k\times m, r,h)}{q-1},$$
where $n_q(k\times m, r,0)$ is explicitly given by Lemma \ref{conto}.
\end{remark}

\begin{remark}\label{refconto}
 The usual way of computing $n_q(k\times k,k,k)$ involves the Bruhat
decomposition of $\mbox{GL}_k(\F_q)$ and the theory of $q$-analogues (see  \cite{stanley}, Proposition 1.10.15). A different 
approach proposed in \cite{li} is based on Gauss sums over finite fields and properties of the Borel
subgroup of $\mbox{GL}_k(\F_q)$. In \cite{buck} Buckheister derived a
recursive description for $n_q(k\times k,r,k)$ using an elementary argument, and
in \cite{bender} Bender applied the results of
\cite{buck} to provide a closed formula for $n_q(k\times k,r,k)$. As Stanley
observed (\cite{stanley}, page 100), the description of \cite{buck} is quite
complicated.  The
following Theorem \ref{htraccia} provides a new recursive formula
for the numbers $n_q(k\times m,r,h)$ which easily follows from Corollary
\ref{formule}. An explicit version of the same formula  can be found in 
 Appendix \ref{appa}.
\end{remark}

\begin{theorem} \label{htraccia}
 Let $1 \le k \le m$ and $1 \le h \le
k$ be integers. For all $0 \le r \le k$ the numbers
$n_q(r,h):=n_q(k\times m, r,h)$ are
recursively computed by the following
formulas.
$$n_q(r,h)= \left\{ \begin{array}{ll} 
1 & \mbox{ if \ $r=0$,} \\ 
q^{mr-1} \left( \begin{bmatrix} k \\
r \end{bmatrix} +(q-1)  \begin{bmatrix} k-h \\
r \end{bmatrix} \right) - \mathlarger{\sum}_{j=0}^{r-1} {n_q(j,h)
\begin{bmatrix} k-j \\ r-j \end{bmatrix}} & \mbox{ if \ $1 \le r \le k-h$,}  \\
q^{mr-1} \begin{bmatrix} k \\
r \end{bmatrix} - \mathlarger{\sum}_{j=0}^{r-1} n_q(j,h)
\begin{bmatrix}
k-j \\ r-j \end{bmatrix} & \mbox{ if \ $k-h+1 \le r \le k$.}
\end{array}
 \right.\ $$
\end{theorem}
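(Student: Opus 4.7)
The plan is to recognize the set of $k\times m$ matrices with vanishing $h$-trace as the dual of an extremely simple one-dimensional Delsarte code, and then to apply the recursive MacWilliams formula of Corollary \ref{formule}.

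First I would define $E\in\mbox{Mat}(k\times m,\F_q)$ to be the matrix with $E_{ii}=1$ for $1\le i\le h$ and all other entries zero. A direct expansion of the trace product gives
\begin{equation*}
\langle M,E\rangle=\mbox{Tr}(ME^t)=\sum_{i=1}^{k}\sum_{j=1}^{m}M_{ij}E_{ij}=\sum_{i=1}^{h}M_{ii}=\mbox{Tr}_h(M)
\end{equation*}
for every $M$. Consequently the one-dimensional code $\mD:=\{\alpha E:\alpha\in\F_q\}$ satisfies $\mD^\perp=\{M\in\mbox{Mat}(k\times m,\F_q):\mbox{Tr}_h(M)=0\}$. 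Since $\mbox{rk}(E)=h$, the rank distribution of $\mD$ is $A_0(\mD)=1$, $A_h(\mD)=q-1$, and $A_i(\mD)=0$ for all other~$i$.

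Next I would observe that, for $1\le r\le k$, the number $n_q(r,h)$ coincides by definition with $A_r(\mD^\perp)$, and that $n_q(0,h)=1=A_0(\mD^\perp)$ since the zero matrix has rank $0$ and trivially zero $h$-trace. Thus recovering the rank distribution of $\mD^\perp$ from that of $\mD$ is exactly the problem the theorem solves, and Corollary \ref{formule} is tailor-made for this direction.

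Finally I would specialize Corollary \ref{formule} to $\mD$. With $|\mD|=q$ and only two nonzero values of $A_i(\mD)$, the auxiliary quantity becomes
\begin{equation*}
a_\nu^k=\frac{q^{m\nu}}{q}\left(\begin{bmatrix}k\\ \nu\end{bmatrix}+(q-1)\begin{bmatrix}k-h\\ \nu\end{bmatrix}\right)=q^{m\nu-1}\left(\begin{bmatrix}k\\ \nu\end{bmatrix}+(q-1)\begin{bmatrix}k-h\\ \nu\end{bmatrix}\right).
\end{equation*}
By Definition \ref{gaussian} the binomial $\begin{bmatrix}k-h\\ \nu\end{bmatrix}$ vanishes precisely when $\nu>k-h$, which produces the two cases of the theorem statement. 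Substituting $a_\nu^k$ into the recursion $B_\nu=a_\nu^k-\sum_{j=0}^{\nu-1}B_j\begin{bmatrix}k-j\\ \nu-j\end{bmatrix}$ of Corollary \ref{formule} and using $B_\nu=n_q(\nu,h)$ yields exactly the stated formula. There is no substantial obstacle here: the only step requiring thought is identifying the correct generator $E$ of the one-dimensional dual code, after which everything reduces to a direct substitution into Corollary \ref{formule} combined with the case split governed by the vanishing of $\begin{bmatrix}k-h\\ r\end{bmatrix}$.
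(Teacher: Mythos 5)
Your proof is correct and is essentially identical to the paper's: the paper also introduces the same rank-$h$ matrix $M$ (your $E$), identifies the zero-$h$-trace matrices with $\langle M\rangle^\perp$, reads off the rank distribution $A_0=1$, $A_h=q-1$, and invokes Corollary \ref{formule}. Your added remark that the case split comes from the vanishing of $\begin{bmatrix}k-h\\ r\end{bmatrix}$ for $r>k-h$ is a correct (and slightly more explicit) justification of the final substitution.
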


\begin{proof}
 We fix $1 \le h \le k$.  Let $M \in \mbox{Mat}(k\times m, \F_q)$ be the matrix
defined by
 $$M_{ij}:= \left\{ \begin{array}{ll} 1 & \mbox{ if  $i=j \le h$,} \\ 0 &
\mbox{ otherwise.} \end{array}\right.\ $$
Let $\mC:= \langle M \rangle \subseteq \mbox{Mat}(k\times m, \F_q)$ be the
Delsarte code
generated by $M$ over $\F_q$. It
is easy to check that for any matrix $N \in \mbox{Mat}(k\times m, \F_q)$  we
have $\mbox{Tr}_h(N)=\mbox{Tr}(M
N^t) = \langle M,N\rangle$. As a consequence, the set of
matrices in $\mbox{Mat}(k \times m, \F_q)$ with zero $h$-trace is
precisely $\mC^\perp$. Hence, we have
$n_q(r,h)=B_r$ for all $0 \le r \le k$, where  ${(B_j)}_{j \in \N}$ is the rank
distribution of $\mC^\perp$. If ${(A_i)}_{i \in \N}$ denotes
the rank distribution
of $\mC$, then we clearly have $A_0=1$, $A_h=q-1$, and $A_i=0$ for $i \notin \{
0,h\}$. 
The theorem now follows from Corollary \ref{formule}.
\end{proof}

\begin{example}
 Let $q=4$, $k=3$, $m=4$. Theorem \ref{formule} allows us to compute all
 the values of $n_4(3\times 4,r,h)$ as in Table \ref{tab}.

\begin{table}[h!]
\centering
  \begin{tabular}{c||c|c|c|c|}
          & $r=0$ & $r=1$ & $r=2$ & $r=3$  \\
          \hline   \hline 
  $h=1$   &   1   &   2283    & 381780   &  3810240  \\ \hline
  $h=2$   &   1   &   1515    &   336468  &  3856320 \\ \hline
  $h=3$   & 1     &  132     &  337428  &  3855552  \\ \hline
 \end{tabular}
 \caption{Values of $n_4(3\times 4,r,h)$.}
 \label{tab}
\end{table} 
\end{example}

\section*{Conclusions}
In this paper we prove that the duality theory of linear Delsarte codes
generalizes
the duality theory of linear Gabidulin codes. The relation between the two
duality theories is
described through trace-orthogonal bases of finite fields. We also give an
elementary proof of MacWilliams
identities for the general case of Delsarte codes, and show how to employ them
to re-establish in a very concise way
the main results of the theory of rank-metric codes. This also proves that
MacWilliams identities may be taken as a starting point for the theory of
rank-metric codes. We also investigate optimal Delsarte anticodes, and
characterize them in terms of MRD codes. Finally, we show an application of our
results solving a problem in enumerative combinatorics in an elementary way.

\section*{Acknowledgement}
The author is grateful to Elisa Gorla and to the Referees for many useful
suggestions that
improved the presentation of the paper.

\appendix

\section{Explicit form of Theorem \ref{theo} and \ref{htraccia}} \label{appa}

Using known properties of binomial coefficients one can show that 
Theorem \ref{theo} implies  Theorem 3.3 of \cite{del1} as an easy
corollary.
The following result, first proved by Delsarte using the theory of association
schemes,
 may be regarded as the explicit version of Theorem \ref{theo}.

\begin{theorem} \label{eq_del}
 Let $\mathcal{C} \subseteq \mbox{Mat}(k \times m, \F_q)$ be a code.
 Let
${(A_i)}_{i \in \N}$ and ${(B_j)}_{j \in \N}$ be the rank distributions of $\mC$
and $\mC^\perp$,
respectively. We have
$$B_j=\frac{1}{|\mC|} \mathlarger{‎‎\sum}_{i=0}^k A_i 
{\mathlarger{‎‎\sum}_{s=0}^k {(-1)}^{j-s}} q^{ms+ \binom{j-s}{2}}
\begin{bmatrix} k-s \\ k-j  \end{bmatrix} \begin{bmatrix} k-i \\ s 
\end{bmatrix}$$
for $j =0,...,k$.
\end{theorem}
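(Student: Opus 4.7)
The plan is to obtain Theorem \ref{eq_del} from Theorem \ref{theo} by a standard $q$-binomial (Möbius) inversion. Theorem \ref{theo} expresses certain weighted sums of the $A_i$'s as weighted sums of the $B_j$'s; since the system is triangular in the $B_j$'s, one can in principle solve for each $B_j$ explicitly.

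First, I would rewrite Theorem \ref{theo} in a more symmetric form. Using the identity $\begin{bmatrix} k-j \\ \nu-j \end{bmatrix} = \begin{bmatrix} k-j \\ k-\nu \end{bmatrix}$ and extending sums by zero, Theorem \ref{theo} reads, for every $\nu \in \{0,\dots,k\}$,
$$q^{m\nu} \sum_{i=0}^k A_i \begin{bmatrix} k-i \\ \nu \end{bmatrix} \;=\; |\mC| \sum_{j=0}^k B_j \begin{bmatrix} k-j \\ k-\nu \end{bmatrix}.$$
Substituting $u = k-\nu$ and $j' = k - j$, the right-hand side becomes $|\mC|\sum_{j' \ge u} \begin{bmatrix} j' \\ u \end{bmatrix} B_{k-j'}$, placing the identity in the canonical ``upper'' form to which $q$-Möbius inversion on the subspace lattice applies.

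Second, I would invoke the classical $q$-inversion formula: if $F(u) = \sum_{j' \ge u} \begin{bmatrix} j' \\ u \end{bmatrix} G(j')$, then
$$G(u) = \sum_{j' \ge u} (-1)^{j'-u} q^{\binom{j'-u}{2}} \begin{bmatrix} j' \\ u \end{bmatrix} F(j').$$
This is Möbius inversion on the lattice of subspaces of $\F_q^k$, and can be derived from the orthogonality relation $\sum_{n} (-1)^{j'-n} q^{\binom{j'-n}{2}} \begin{bmatrix} j' \\ n \end{bmatrix} \begin{bmatrix} n \\ u \end{bmatrix} = \delta_{j',u}$, for which I would simply cite a standard $q$-series reference.

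Third, I would translate back. Setting $j = k-u$ and $s = k - j'$, so that $j'-u = j-s$, the inverted identity gives
$$B_j = \frac{1}{|\mC|} \sum_{s=0}^{j} (-1)^{j-s} q^{ms+\binom{j-s}{2}} \begin{bmatrix} k-s \\ k-j \end{bmatrix} \sum_{i=0}^k A_i \begin{bmatrix} k-i \\ s \end{bmatrix}.$$
Because $\begin{bmatrix} k-s \\ k-j \end{bmatrix}=0$ for $s>j$, the outer sum can be extended up to $s=k$ without changing its value, and interchanging the two summations yields exactly the statement of Theorem \ref{eq_del}. The main obstacle is purely combinatorial bookkeeping: keeping track of the two changes of variable $\nu \mapsto k-\nu$ and $j \mapsto k-j$, and matching signs and powers of $q$ to the standard form of the $q$-Möbius inversion formula. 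No genuinely new coding-theoretic input is required beyond Theorem \ref{theo}.
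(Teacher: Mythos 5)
Your proposal is correct and is essentially the paper's own argument: the paper also inverts the triangular system of Theorem \ref{theo}, writing it as $S\cdot(B_0,\dots,B_k)^t = T\cdot(A_0,\dots,A_k)^t$ and verifying that the claimed coefficient matrix $P$ satisfies $T=SP$ via the orthogonality relation $\sum_{r}(-1)^{r-s}q^{\binom{r-s}{2}}\begin{bmatrix} i-s \\ r-s \end{bmatrix}=\delta_{is}$, which is exactly the $q$-M\"obius inversion you invoke (the paper derives it inline from the $q$-Binomial Theorem rather than citing it). The only difference is presentational: you reindex to the standard subspace-lattice inversion form, while the paper performs the equivalent matrix computation directly.
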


\begin{proof} 
Throughout this proof the rows and columns of matrices are labeled from $0$ to
$k$ for convenience (instead of from $1$ to $k+1$).
Define the matrix $P \in \mbox{Mat}(k+1 \times k+1,\F_q)$ by
 $$P_{ji}:=\frac{1}{|\mC|} \sum_{s=0}^k (-1)^{j-s} q^{ms+ \binom{j-s}{2}} 
\begin{bmatrix} k-s \\ k-j  \end{bmatrix} \begin{bmatrix} k-i \\ s 
\end{bmatrix}$$
for $j,i \in \{0,...,k\}$. We can write the statement in matrix form as
$(B_0,...,B_k)^t= P \cdot (A_0,...,A_k)^t$. Define  matrices
$S,T \in \mbox{Mat}(k+1 \times k+1,\F_q)$ by
$$S_{ij}:= \begin{bmatrix} k-j \\ i-j  \end{bmatrix}, \ \ \ \ \ 
T_{ij}:= \frac{q^{mi}}{|\mC|} \begin{bmatrix} k-j \\ i  \end{bmatrix}$$
for $i,j \in \{0,...,k\}$. We notice that $S$ is invertible, since
it is lower-triangular and $S_{ii}=1$ for $i=0,...,k$.
Theorem \ref{theo} reads
$S \cdot (B_1,...,B_k)^t = T \cdot (A_0,...,A_k)^t$, i.e., 
$(B_1,...,B_k)^t = S^{-1}T \cdot (A_0,...,A_k)^t$. Hence it suffices to prove
$P=S^{-1}T$, i.e., $T=SP$. Fix arbitrary integers $i,j \in \{0,...,k\}$. We have
\begin{eqnarray*}
 (SP)_{ij} &=& \frac{1}{|\mC|}\sum_{r=0}^k \begin{bmatrix} k-r \\ i-r 
\end{bmatrix} 
\sum_{s=0}^k (-1)^{r-s} q^{ms+ \binom{r-s}{2}} 
\begin{bmatrix} k-s \\ k-r  \end{bmatrix} \begin{bmatrix} k-j \\ s 
\end{bmatrix} \\
&=&  \frac{1}{|\mC|} \sum_{s=0}^k  q^{ms} \begin{bmatrix} k-j \\ s 
\end{bmatrix} 
\sum_{r=0}^k  \begin{bmatrix} k-r \\ i-r  \end{bmatrix} 
 (-1)^{r-s} q^{\binom{r-s}{2}} 
\begin{bmatrix} k-s \\ k-r  \end{bmatrix}.
\end{eqnarray*}
 Clearly,
$$\begin{bmatrix} k-r \\ i-r  \end{bmatrix}= \begin{bmatrix} k-r \\ k-i 
\end{bmatrix},$$
and using the definition of Gaussian binomial coefficient one finds
$$\begin{bmatrix} k-s \\ k-r  \end{bmatrix} \begin{bmatrix} k-r \\ k-i 
\end{bmatrix} =
\begin{bmatrix} k-s \\ k-i \end{bmatrix} \begin{bmatrix} i-s \\ r-s 
\end{bmatrix}.$$
Hence we have
\begin{eqnarray*}
\sum_{r=0}^k  \begin{bmatrix} k-r \\ i-r  \end{bmatrix} 
 (-1)^{r-s} q^{\binom{r-s}{2}} 
\begin{bmatrix} k-s \\ k-r  \end{bmatrix} &=&
\sum_{r=0}^k  \begin{bmatrix} k-s \\ k-i \end{bmatrix} \begin{bmatrix} i-s \\
r-s  \end{bmatrix}
(-1)^{r-s} q^{\binom{r-s}{2}} \\
&=& \begin{bmatrix} k-s \\ k-i \end{bmatrix} \sum_{r=0}^k   \begin{bmatrix} i-s
\\ r-s  \end{bmatrix}
(-1)^{r-s} q^{\binom{r-s}{2}} \\
&=& \begin{bmatrix} k-s \\ k-i \end{bmatrix} \sum_{r=-s}^{k-s}  
 \begin{bmatrix} i-s \\ r  \end{bmatrix}
(-1)^{r} q^{\binom{r}{2}} \\
&=& \begin{bmatrix} k-s \\ k-i \end{bmatrix} \sum_{r=0}^{i-s}  
 \begin{bmatrix} i-s \\ r  \end{bmatrix}
(-1)^{r} q^{\binom{r}{2}} \\
&=& \left\{ \begin{array}{ll} 1 & \mbox{ if $s=i$,} \\ 0 & \mbox{ otherwise,}
\end{array} \right.\
\end{eqnarray*}
where the last equality follows from the $q$-Binomial Theorem (\cite{stanley},
page 74).
It follows
$$(SP)_{ij}= \frac{1}{|\mC|} q^{mi} \begin{bmatrix} k-j \\ i 
\end{bmatrix}=T_{ij},$$
as claimed.
\end{proof}

Arguing as in the proof of Theorem \ref{htraccia} and replacing Corollary
\ref{formule}
with Theorem \ref{eq_del} we easily obtain the following explicit version of
Theorem \ref{htraccia}.

\begin{theorem} \label{htrex}
 Let $1 \le k \le m$, $1 \le h \le
k$ and $0 \le r \le k$ be integers. The number of $k \times m$ matrices
over $\F_q$ having rank $r$ and zero $h$-trace is 
$$n_q(k\times m, r,h)=\frac{1}{q} \mathlarger{‎‎\sum_{s=0}^k} {(-1)}^{r-s}
q^{ms+ \binom{r-s}{2}}
\begin{bmatrix} k-s \\ k-r  \end{bmatrix} \left( \begin{bmatrix} k \\ s 
\end{bmatrix} +
(q-1)  \begin{bmatrix} k-h \\ s  \end{bmatrix}\right).$$ 
\end{theorem}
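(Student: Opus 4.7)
The plan is to follow exactly the strategy used in the proof of Theorem \ref{htraccia}, but substitute the closed-form identity of Theorem \ref{eq_del} in place of the recursive identity of Corollary \ref{formule}. First I would introduce the single matrix $M \in \mbox{Mat}(k \times m, \F_q)$ with $M_{ij} = 1$ when $i = j \le h$ and $M_{ij} = 0$ otherwise, and consider the one-dimensional Delsarte code $\mC := \langle M \rangle$. As already observed in the proof of Theorem \ref{htraccia}, for every $N \in \mbox{Mat}(k \times m, \F_q)$ one has $\langle M, N \rangle = \mbox{Tr}(MN^t) = \mbox{Tr}_h(N)$, so that $\mC^\perp$ is precisely the set of matrices with vanishing $h$-trace. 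Consequently $n_q(k \times m, r, h) = B_r$, where $(B_j)_{j \in \N}$ denotes the rank distribution of $\mC^\perp$.

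Next I would record the rank distribution $(A_i)_{i \in \N}$ of $\mC$: since $\mC$ has $q$ elements and each nonzero element of $\mC$ has the form $\lambda M$ with $\lambda \in \F_q^\times$, and $\mbox{rk}(\lambda M) = h$ for $\lambda \neq 0$, we have $A_0 = 1$, $A_h = q - 1$, and $A_i = 0$ for $i \notin \{0, h\}$. At this point the cardinality $|\mC| = q$ is also determined.

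Finally I would plug these values, along with $|\mC| = q$, directly into the formula of Theorem \ref{eq_del} for $B_r$:
\[
B_r = \frac{1}{q} \sum_{i=0}^k A_i \sum_{s=0}^k (-1)^{r-s} q^{ms + \binom{r-s}{2}} \begin{bmatrix} k-s \\ k-r \end{bmatrix} \begin{bmatrix} k-i \\ s \end{bmatrix}.
\]
Only the terms $i = 0$ and $i = h$ survive, contributing $\begin{bmatrix} k \\ s \end{bmatrix}$ and $(q-1)\begin{bmatrix} k-h \\ s \end{bmatrix}$ respectively inside the inner sum. Factoring out the common expression $(-1)^{r-s} q^{ms + \binom{r-s}{2}} \begin{bmatrix} k-s \\ k-r \end{bmatrix}$ from both contributions yields exactly the claimed identity for $n_q(k \times m, r, h)$.

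There is essentially no obstacle here: the entire argument is a routine specialization, and the only thing to be careful about is verifying that $\langle M, N \rangle = \mbox{Tr}_h(N)$ (a direct computation) and that the values of $A_i$ are as stated. Since Theorem \ref{eq_del} has already been established and the code $\mC$ has a trivial structure, the result follows by a single substitution.
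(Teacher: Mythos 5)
Your proposal is correct and is exactly the argument the paper intends: the paper's own ``proof'' is the one-line remark that one argues as in Theorem \ref{htraccia} with Corollary \ref{formule} replaced by Theorem \ref{eq_del}, which is precisely your specialization of the explicit formula to the code $\mC = \langle M \rangle$ with $|\mC| = q$, $A_0 = 1$, $A_h = q-1$. Nothing further is needed.
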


\begin{remark}
 Theorem \ref{htrex} generalizes the works cited in Remark \ref{refconto}.
\end{remark}

\end{document}